\definecolor{forestgreen}{RGB}{34,139,34}
\newtheorem{theorem}{Theorem}
\xpatchcmd{\proof}{\@addpunct{.}}{\@addpunct{:}}{}{}
\def\@hangfrom#1{\setbox\@tempboxa\hbox{{#1}}%
      \hangindent 0pt
      \noindent\box\@tempboxa}
\newcommand{\vast}{\bBigg@{3}}
\newcommand{\Vast}{\bBigg@{4}}
\newcommand*{\indep}{%
  \mathbin{%
    \mathpalette{\@indep}{}%
  }%
}
\newcommand*{\nindep}{%
  \mathbin{
    \mathpalette{\@indep}{\not}
  }%
}
\newcommand*{\@indep}[2]{%
  \sbox0{$#1\perp\m@th$}
  \sbox2{$#1=$}
  \sbox4{$#1\vcenter{}$}
  \rlap{\copy0}
  \dimen@=\dimexpr\ht2-\ht4-.2pt\relax
  \kern\dimen@
  {#2}%
  \kern\dimen@
  \copy0 
} 
\DeclareMathOperator{\E}{\textnormal{\mbox{E}}}
\def\@seccntformat#1{\@ifundefined{#1@cntformat}%
   {\csname the#1\endcsname\quad}  
   {\csname #1@cntformat\endcsname}
}
\let\oldappendix\appendix 
\renewcommand\appendix{%
    \oldappendix
    \newcommand{\section@cntformat}{\appendixname~\thesection\quad}
}
\begin{document}

\title{Sensitivity analysis for studies transporting prediction models}

\author[1]{Jon A. Steingrimsson}
\author[2,3]{Sarah E. Robertson}
\author[2-4]{Issa J. Dahabreh}

\affil[1]{Department of Biostatistics, Brown University School of Public Health, Providence, RI }
\affil[2]{CAUSALab, Harvard T.H. Chan School of Public Health, Boston, MA}
\affil[3]{Department of Epidemiology, Harvard T.H. Chan School of Public Health, Boston, MA}
\affil[4]{Department of Biostatistics, Harvard T.H. Chan School of Public Health, Boston, MA}

\maketitle{}

\clearpage

\vspace*{1in}

\begin{abstract}
\noindent
\linespread{1.3}\selectfont
We consider the estimation of measures of model performance in a target population when covariate and outcome data are available on a sample from some source population and covariate data, but not outcome data, are available on a simple random sample from the target population. When outcome data are not available from the target population, identification of measures of model performance is possible under an untestable assumption that the outcome and population (source or target population) are independent conditional on covariates. In practice, this assumption is uncertain and, in some cases, controversial. Therefore, sensitivity analysis may be useful for examining the impact of assumption violations on inferences about model performance. Here, we propose an exponential tilt sensitivity analysis model and develop statistical methods to determine how sensitive measures of model performance are to violations of the assumption of conditional independence between outcome and population. We provide identification results and estimators for the risk in the target population, examine the large-sample properties of the estimators, and apply the estimators to data on individuals with stable ischemic heart disease.
\end{abstract}

\clearpage

\section{Introduction}

Users of prediction models are typically interested in obtaining model-derived predictions in a target population of substantive interest. However, the data used for model building and evaluation of model performance (i.e., the source data) are often not a random sample from the target population (e.g.,~due to convenience sampling or the two data sources coming from different geographic regions or healthcare systems). When prediction error modifiers \cite{steingrimsson2023transporting}, that is, variables that affect model performance, have a different distribution between the source population and the target population, measures of model performance calculated using data from the source population are not representative of model performance in the target population. 

It is possible to estimate model performance in the target population under the untestable assumption that the outcome is independent of the population (source or target) given the observed covariates \cite{shimodaira2000improving, sugiyama2007covariate, sugiyama2012machine, steingrimsson2023transporting} using the source data and covariate data from the target population, even when outcome information is unavailable in the target population. The assumption that the outcome is independent of the population, however, is untestable using the observed data and will often be uncertain, or even controversial, in practical applications. Therefore, it is useful to conduct sensitivity analyses to determine how sensitive conclusions are to violations of the conditional transportability condition.

There is a large literature on sensitivity analysis for missing data \cite{robins2000c,rotnitzky1998semiparametric,scharfstein2018globalBiometrics,scharfstein2018globalSMMR,scharfstein2021global,daniels2008missing} and unmeasured confounding in observational studies\cite{cornfield1959smoking,brumback2004sensitivity,klungsoyr2009,rosenbaum1983assessing,scharfstein2021semiparametric}. In addition, a smaller but growing literature consider methods for sensitivity analyses when extending (i.e., generalizing or transporting) inferences about treatment effects from a randomized trial to a target population \cite{nguyen2017sensitivity, nguyen2018sensitivity,dahabreh2019sensitivitybiascor, dahabreh2022global, duong2023sensitivity}. To our knowledge there is no prior work developing sensitivity analysis methods for evaluating the performance of prediction models in a target population. This task involves different target parameters and requires different identifiability results and estimation procedures than transportability of measures of model performance.

Here, we develop global sensitivity analysis methods for loss-based measures of model performance in the target population using an exponential tilt model \cite{scharfstein2018globalBiometrics, scharfstein2018globalSMMR, scharfstein2021global, scharfstein2021semiparametric}. Global sensitivity analysis allows for evaluation of how big the violation of a core assumption needs to be in order for conclusions to change \cite{scharfstein2014global}. We provide identification results and derive estimators and large sample properties of the estimators for both ``nested'' and ``non-nested'' sampling designs \cite{dahabreh2021study}. We show how the range of the sensitivity parameter can be selected by hypothesizing about a reasonable range of prevalence rate of the outcome in the target population. We illustrate the methods using data on individuals with stable ischemic heart disease.

\section{Goals of the analysis, study design, and data structures}

Let $Y$ be a univariate outcome assessed at the end of the study (e.g.,~binary, count, or continuous) and $X \in \mathcal{X}$ a baseline covariate vector. Under a non-nested sampling design \cite{dahabreh2020studydesigns}, we assume that we have access to a random sample of outcome and covariate information from the source population $\{(X_i, Y_i), i = 1, \ldots, n_1\}$ and a separately obtained random sample of covariates, but no outcome information, from the target population $\{X_i, i = 1, \ldots, n_0\}$. This setup does not restrict the data from the source population to be obtained from a formal sampling process and can be thought of as if sampled from some underlying hypothetical super-population that is potentially not well characterized (e.g.,~as is the cases when using convenience sampling) \cite{robins1988confidence, dahabreh2019efficient}. However, we assume that the target population data is representative of a target population of substantive interest. Let $S$ be an indicator whether the data is from the source population ($S=1$ if from the source population and $S =0$ if from the target population). Under this setup, the combined data from the source and target population is
\[
\mathcal{O} = \{X_i, S_i, S_i \times Y_i, i = 1, \ldots, n = n_1 + n_0\}.
\]
Let $X^*$ be a subset of $X$ that is used for constructing a prediction model and let $h(X^*, \beta)$ be a prediction model for the conditional expectation $\E[Y|X^*,S=1]$ indexed through the unknown parameter $\beta \in \mathcal{B}$. We use $\widehat \beta$ to denote an estimator for $\beta$ and $h(X^*, \widehat \beta)$ as the fitted model. Throughout this paper, we do \emph{not} assume that the model $h(X^*, \beta)$ is correctly specified; thus, our results also hold for misspecified models (under the assumptions listed in Sections \ref{sec:sensitivity_analysis} and  \ref{sec:id-sens}).  We assume that the model is built (i.e.,~$\beta$ is estimated) on a dataset that is independent of the data used to evaluate the model and we use $f(\cdot)$ to generically denote densities. 

We focus on estimation of loss-based measures of model performance in the target population. A loss function $L(Y, h(X^*, \widehat \beta))$ quantifies the discrepancy between the observed outcome $Y$ and model-derived predictions $h(X^*, \widehat \beta)$. Common examples include the mean squared error, absolute deviation, and Brier loss functions \cite{brier1950verification}. Our target parameter is the expected loss (risk) in the target population; in the non-nested design that parameter is $\E[L(Y,h(X^*, \widehat \beta))|S=0]$.

An alternative approach is to use a nested design \cite{dahabreh2020studydesigns, steingrimsson2023transporting} where the source population is nested within a cohort that is a sample from the target population (e.g.,~using record linkage of the source data with data from the target population). For nested designs, we assume that covariate information is available from the entire cohort but outcome information is only available from the source population data. For nested designs \cite{dahabreh2021study}, the target parameter is $\E[L(Y, h(X^*, \widehat \beta))]$. Sampling designs for both nested and non-nested designs have been discussed elsewhere \cite{dahabreh2020studydesigns, steingrimsson2023transporting} and all expectations and probabilities are under the distribution induced by the sampling design.

\section{Identification}
\label{sec:id}

\subsection{Identifiability conditions}

The following two conditions are sufficient for identifiability of loss-based measures of model performance using the observable data $\mathcal{O}$ for non-nested designs \cite{steingrimsson2023transporting}.
\begin{itemize}
    \item[A1.] Positivity: $\Pr[S=1|X=x] >0$ for all $x \in \mathcal{X}$ that have a positive density in the target population $f_{X,S}(x, S=0) > 0$. Condition A1 is in principle testable using the observed data, but performing tests for the validity of that assumption can be challenging with high-dimensional covariates \cite{petersen2012diagnosing}.
    \item[A2.] Conditional transportability: $Y \indep S|X$. This key condition is untestable using the observed data because outcome information from the target population is unavailable; therefore, in many applications, conditional transportability is an uncertain, and even controversial, assumption.
\end{itemize}

For nested designs we need a slightly modified version of the positivity condition:
\begin{itemize}
    \item[A1$^*$.] Positivity: $\Pr[S=1|X=x] >0$ for all $x \in \mathcal{X}$ such that $f_{X}(x)>0$.
\end{itemize}



\subsection{Identification of measures of model performance}

Under conditions A1 and A2, the risk in the target population for a non-nested design can be identified \cite{steingrimsson2023transporting} using a nested expectation (``g-formula''-like \cite{robins1986}) expression,
\[
\phi \equiv \E[\E[L(Y, h(X^*, \widehat \beta))|X, S=1]|S=0];
\]
or, equivalently, using an inverse odds weighting expression,
\[
\phi = \frac{1}{\Pr[S=0]} \E\left[\frac{I(S=1) \Pr[S=0|X]}{\Pr[S=1|X]} L(Y, h(X^*, \widehat \beta))\right].
\]
If conditions A1$^*$ and A2 hold, then the risk in the target population for a nested design can be identified \cite{steingrimsson2023transporting} using a different nested expectation expression,
\[
\psi \equiv \E[\E[L(Y, h(X^*, \widehat \beta))|X, S=1]];
\]
or, equivalently, using an inverse probability weighting expression, 
\[
\psi = \E\left[\frac{I(S=1)}{\Pr[S=1|X]} L(Y, h(X^*, \widehat \beta))\right].
\]
These identification results critically depend on assuming that the conditional transportability condition (A2) holds; in the rest of this manuscript, we consider methods for examining how sensitive results are to violations of this assumption.

\section{Sensitivity analysis when transporting measures of model performance}\label{sec:sensitivity_analysis}

\subsection{Sensitivity analysis model}

Assume that condition A2 does not hold, so that $ Y \mathlarger{\nindep} S | X, $ and
\[
f_{Y | X, S}(y | x,  s = 0) \neq f_{Y | X,S}(y | x, s = 1).
\]
We use an exponential tilt model \cite{scharfstein2018globalBiometrics, scharfstein2018globalSMMR, scharfstein2021global} to parameterize violations of the conditional transportability assumption. That is, we assume 
\begin{equation}\label{model_exponential_tilt}
	f_{Y | X, S}(y | x, s = 0) \propto e^{ \eta q(y)} f_{Y | X,  S}(y| x , s = 1), \eta \in \mathbb R,
\end{equation} 
where $q$ is a fixed increasing function and $\eta$ is the sensitivity analysis parameter (which is not identifiable because outcome information is unavailable from the target population). Setting $\eta = 0$ corresponds to the case where conditional transportability holds; $\eta$ values further from zero represent greater violations of the conditional transportability assumption. 

Because the left-hand-side of equation \eqref{model_exponential_tilt} is a density, we have that
\begin{equation}\label{model_exponential_tilt_expe}
  \begin{split}
  f_{Y | X, S}( y | x, s = 0) 
      &= \dfrac{ e^{\eta q(y)} f_{Y | X, S}(y| x, s = 1) }{ \E [ e^{\eta q(Y)} | X = x, S = 1]}.
  \end{split}
\end{equation} 

For a binary outcome with $q$ as the identity function, the exponential tilt model implies that
\begin{equation}\label{model_exponential_tilt-bin}
	\Pr[Y=1 | X, S = 0] \propto e^{ \eta} \Pr[Y=1|X , S = 1], \eta \in \mathbb R;
\end{equation} 
it follows that $\eta >0$ ($\eta<0)$ implies that the conditional probability of the outcome in the target population is higher (lower) than in the source population.

\subsection{Relationship with selection models}

Using Bayes theorem, we can re-write the exponential tilt model in equation \eqref{model_exponential_tilt_expe} as
\begin{equation*}
	\begin{split}
	\dfrac{\Pr[S = 0 | X , Y = y] }{\Pr[S = 1 | X, Y = y] } &=  \dfrac{\Pr[S = 0 | X]}{\Pr[S= 1 | X]}  \times \dfrac{ e^{\eta q(y)}}{ \E [ e^{\eta q(Y)} | X, S = 1]}.
	\end{split}
\end{equation*} 
Taking logarithms gives,
\begin{equation}\label{model_selection_equiv_to_tilt}
	\begin{split}
	\mbox{logit} \big(\Pr[S = 0 | X , Y = y ]\big) &=  \mbox{logit} \big(\Pr[S = 0 | X]\big) + \eta q(y) - \ln \left(\E [ e^{\eta q(Y)} | X, S = 1 ]\right),
	\end{split}
\end{equation} 
where for a real number $0<u<1$, $\mbox{logit}(u) = \ln \big( u (1 -u)^{-1} \big) .$ In other words, the exponential tilt model has an interpretation as an odds of a selection model, where selection depends, in addition to the measured covariates $X$, on the outcome $Y$ which is not observed when $S = 0$.


\section{Identifiability, estimation, and inference}
\label{sec:id-sens}

\subsection{Identifiability of the sensitivity analysis model}

In Appendix \ref{appendix:identification}, we show that, under our sensitivity analysis model, for a fixed $\eta$, the risk in the target population for a non-nested design is identified by
\begin{equation}\label{eq_mean_identification_S0}
  \phi(\eta) = \E\! \Bigg[ \dfrac{\E \big[ L(Y, h(X^*, \widehat \beta))e^{\eta q(Y)} | X, S = 1 \big]}{\E \big[ e^{\eta q(Y)} | X, S = 1 \big]} \Bigg| S = 0  \Bigg].
\end{equation}
Furthermore, in Appendix \ref{appendix:identification}, we show that, under our sensitivity analysis model, for a fixed $\eta$, the risk in the target population for a nested design is identified by
\begin{equation}\label{eq_mean_identification}
  \psi(\eta) = \E[S L(Y, h(X^*, \widehat \beta))] + \E \Bigg[I(S=0) \dfrac{\E \big[ L(Y, h(X^*, \widehat \beta))e^{\eta q(Y)} | X, S = 1 \big]}{\E \big[ e^{\eta q(Y)} | X, S = 1 \big]}\Bigg].
\end{equation}
The first term in expression \eqref{eq_mean_identification} is independent of the sensitivity parameter $\eta$ and represents the contribution of the sampled subset of the target population where the model is developed. Using data from participants from the source population does not rely on the sensitivity model (for them no assumption is in doubt) and thus their contribution to the overall analyses should not change with different values of the sensitivity parameter.

Because $\eta$ is not identifiable using the observed data $\mathcal{O}$ we propose to use expressions \eqref{eq_mean_identification_S0} and \eqref{eq_mean_identification} to conduct sensitivity analysis for a reasonable range of $\eta$ values; in Section \ref{sec:sel-par} we show how knowledge about the marginal probability of the outcome in the target population can be used to inform what range of $\eta$ values to consider.

\subsection{Estimation in the sensitivity analysis model for non-nested designs}

The sample analog of expression \eqref{eq_mean_identification_S0} gives the following conditional loss estimator for the risk in the target population under a non-nested design:
\[
\widehat \phi_{cl}(\eta) = \frac{1}{n_0}  \sum_{i=1}^{n} I(S_i=0) \widehat b(X_i; \eta),
\]
where $\widehat b(X; \eta)$ is an estimator for
\[
 b(X;\eta) =  \dfrac{\E \big[ L(Y, h(X^*, \widehat \beta))e^{\eta q(Y)} | X, S = 1 \big]}{\E \big[ e^{\eta q(Y)} | X, S = 1 \big]}.
\]
When $\eta=0$ (i.e.,~when the conditional transportability condition holds), the estimator $ \widehat \phi_{cl}(\eta)$ is equal to conditional loss estimator \cite{morrison2021}.

For binary $Y$ and $q$ as the identity function, we can estimate $b(X;\eta)$ using
\[
\widehat b(X; \eta) = \left(  \dfrac{  L(1, h(X^*, \widehat \beta)) e^{\eta} \widehat g(X) + L(0, h(X^*, \widehat \beta))  ( 1 - \widehat g(X))  }{  1 + \widehat g(X) (e^{\eta} - 1)  } \right),
\]
where $\widehat g(X)$ is an estimator for $\Pr[Y = 1 | X, S = 1]$. For a continuous $Y$, we can estimate $b(X;\eta)$ using a weighted linear regression of $L(Y, h(X^*, \widehat \beta))$ on $X$ using data from the source population with weights equal to $e^{\eta q(Y)}$. 

In Supplementary Web Appendix \ref{appendix:influence_functions} we show that the influence function for $\phi(\eta)$ under the non-parametric model \cite{bickel1993efficient} of the observable data is
\begin{align*}
\Phi^1(\eta) &= \dfrac{1}{\Pr[S = 0]}  \vast\{ I(S=0) \Bigg\{ \dfrac{\E[L(Y, h(X^*, \widehat \beta))e^{\eta q(Y)} | X, S = 1]}{\E[e^{\eta q(Y)} | X, S = 1]}  - \phi(\eta) \Bigg\} \\
& + \dfrac{I(S = 1) \Pr[S = 0 | X] e^{\eta q(Y)} }{\Pr[S=1| X ] \E[e^{\eta q(Y)} | X, S = 1]} \times \Bigg\{ L(Y, h(X^*, \widehat \beta))-  \dfrac{\E[L(Y, h(X^*, \widehat \beta))e^{\eta q(Y)} | X, S = 1]}{\E[e^{\eta q(Y)} | X, S = 1]}   \Bigg\}   \vast\}.
\end{align*}

The influence function $\Phi^1(\eta)$ suggests the augmented estimator 
\begin{align*}
\widehat \phi_{aug}(\eta) = \dfrac{1}{n_0} \sum_{i=1}^n \left( I(S_i=0) \widehat b(X_i; \eta) + \dfrac{I(S_i = 1) (1 - \widehat p(X_i)) e^{\eta q(Y_i)} }{\widehat p(X_i) \widehat c(X_i;\eta) } \times \left( L(Y_i, h(X_i^*, \widehat \beta))-  \widehat b(X_i; \eta)\right) \right),
\end{align*}
where $\widehat p(X)$ is an estimator for $\Pr[S=1|X]$ and $\widehat c(X;\eta)$ is an estimator for $\E[e^{\eta q(Y)} | X, S = 1]$. The conditional loss estimator $\widehat \phi_{cl}(\eta)$ is a special case of the augmented estimator with $\widehat p(X_i) =1$ for all $i$. And if the conditional transportability condition holds (i.e.,~$\eta = 0$), then the augmented estimator is identical to the doubly robust estimator developed in  \cite{morrison2021}.  

To study the large sample properties of $\widehat \phi_{aug}(\eta)$ we define, for arbitrary functions $b'(X)$, $c'(X)$, $p'(X)$, and $\gamma'$, the function
\begin{align*}
H(X,&S,Y; b'(X), c'(X), p'(X), \gamma') \\ &= \gamma' \left(I(S=0) b'(X) +  \frac{I(S=1) (1 - p'(X)) e^{\eta q(Y)}\left(L(Y, h(X^*, \widehat \beta)) - b'(X) \right) }{p'(X) c'(X)} \right).
\end{align*}
For a random variable $W$ we define $\mathbb{P}_n(W) = \frac{1}{n} \sum_{i=1}^n W_i$ and $\mathbb{G}_n(W) = \sqrt{n}(\mathbb{P}_n(W) - \E[W])$. Using this notation, we can write the augmented estimator as
\[
\widehat \phi_{aug}(\eta) = \mathbb{P}_n(H(X,S,Y; \widehat b(X; \eta), \widehat c(X;\eta), \widehat p(X), n/n_0)).
\]
In Supplementary Web Appendix \ref{As-Rep-Aug} we prove the following theorem about the large-sample properties of the augmented estimator:
\begin{theorem}
\label{thm-as-rep}
Let $b^*(X;\eta)$, $c^*(X;\eta)$, and $p^*(X)$ be the asymptotic limits of $\widehat b(X; \eta)$, $\widehat c(X;\eta)$, and $\widehat p(X)$, respectively. Under conditions B1-B5 listed in Supplementary Web Appendix \ref{As-Rep-Aug}, the augmented estimator $\widehat \phi_{aug}(\eta)$
\begin{itemize}
    \item Is consistent, that is, $\widehat \phi_{aug}(\eta) \overset{P}{\longrightarrow} \phi(\eta)$;
    \item Has the asymptotic representation
    \begin{equation}
    \label{thm-1-as-main}
    \sqrt{n} (\widehat \phi_{aug}(\eta) - \phi(\eta)) = \mathbb{G}_n(H(X,S,Y; b^*(X;\eta), c^*(X;\eta), p^*(X), \Pr[S=0]^{-1})) + Rem + o_P(1),
    \end{equation}
where the reminder term satisfies
\begin{align*}
Rem  \leq O_P\Bigg(1 &+ \sqrt{n}  \Bigg|\Bigg|\frac{\E[L(Y, h(X^*, \widehat \beta))e^{\eta q(Y)}|X,S=1]}{\E[e^{\eta q(Y)}|X,S=1]}-  \widehat b(X; \eta)\Bigg|\Bigg|^2_2 \times ||\Pr[S=1|X] - \widehat p(X)||^2_2 \\
&+ \sqrt{n} \Bigg|\Bigg|\frac{\E[L(Y, h(X^*, \widehat \beta))e^{\eta q(Y)}|X,S=1]}{\E[e^{\eta q(Y)}|X,S=1]}-  \widehat b(X; \eta)\Bigg|\Bigg|^2_2 \times ||\E[e^{\eta q(Y)}|X,S=1]-  \widehat c(X;\eta)||^2_2\Bigg).
\end{align*}
\end{itemize}
\end{theorem}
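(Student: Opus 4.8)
The plan is to treat $\widehat\phi_{aug}(\eta)$ as a plug-in of the estimated nuisance functions into the functional $H$, which is linear in its last argument, and to follow the standard template for asymptotically linear doubly robust estimators: first establish consistency, then produce a von Mises--type expansion whose leading term is the empirical process of an influence function and whose remainder collects the effects of nuisance estimation, of the random normalizing factor $n/n_0$, and of a second-order bias. Because $\beta$ is estimated on an independent sample, I would condition on the training data and treat $L(Y,h(X^*,\widehat\beta))$ as a fixed measurable function of $(Y,X^*)$ throughout. For consistency, conditions B1--B5 should supply boundedness together with positivity (to control the weights $(1-p)/(pc)$) and $L_2$-consistency of $\widehat b,\widehat c,\widehat p$ toward $b^*,c^*,p^*$, plus $n/n_0\overset{P}{\to}\Pr[S=0]^{-1}$; a uniform law of large numbers and continuity of $H$ in its arguments then give $\mathbb{P}_n(\widehat H)\overset{P}{\to}\E[H(X,S,Y;b^*,c^*,p^*,\Pr[S=0]^{-1})]$, and this limit equals $\phi(\eta)$ by the identification result \eqref{eq_mean_identification_S0} together with the fact that the augmentation term has conditional mean zero at the truth.

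For the asymptotic representation \eqref{thm-1-as-main}, write $\widehat\phi_{aug}(\eta)=(n/n_0)\,\mathbb{P}_n(\widehat G)$, where $\widehat G$ is the bracketed term in $H$ evaluated at $\widehat b,\widehat c,\widehat p$ and $G^*$ uses the limiting nuisances, and (with the expectation in the third term taken over a fresh observation holding the nuisance estimates fixed) decompose
\[
\sqrt n\big(\widehat\phi_{aug}(\eta)-\phi(\eta)\big)=\mathbb{G}_n(H^*)+\tfrac{n}{n_0}\,\mathbb{G}_n(\widehat G-G^*)+\sqrt n\,\tfrac{n}{n_0}\,\E\!\big[\widehat G-G^*\mid\widehat b,\widehat c,\widehat p\,\big]+R_{\mathrm{norm}}+o_P(1),
\]
with $H^*=H(X,S,Y;b^*,c^*,p^*,\Pr[S=0]^{-1})$. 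The increment $\mathbb{G}_n(\widehat G-G^*)$ is $o_P(1)$ either under a Donsker condition on the nuisance classes combined with $L_2$-consistency (stochastic equicontinuity) or under sample splitting, depending on which of B1--B5 is assumed.

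The normalization piece $R_{\mathrm{norm}}$ comes from a ratio/delta-method expansion of $n/n_0$ about $\Pr[S=0]^{-1}$; it should equal $-\,\phi(\eta)\Pr[S=0]^{-1}\,\mathbb{G}_n(I(S=0))+o_P(1)$, which is precisely the $I(S=0)$ correction that distinguishes $H^*$ from the influence function $\Phi^1(\eta)$ and accounts for the $O_P(1)$ entry in the stated bound, since combining $\mathbb{G}_n(H^*)$ with $R_{\mathrm{norm}}$ recovers $\mathbb{G}_n(\Phi^1(\eta))$. I expect the main obstacle to be the bias term $\sqrt n\,(n/n_0)\,\E[\widehat G-G^*\mid\widehat b,\widehat c,\widehat p]$. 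Here I would substitute the explicit form of $G$, take iterated conditional expectations given $X$ using the source-population law, and exploit the defining identity $b(X;\eta)=\E[L\,e^{\eta q(Y)}\mid X,S=1]/\E[e^{\eta q(Y)}\mid X,S=1]$. The key computation is that the conditional bias factorizes as $(\widehat b-b)$ times a bracket that vanishes identically when $(\widehat p,\widehat c)=(p,c)$; expanding that bracket shows every first-order perturbation in $\widehat p-p$ and in $\widehat c-c$ pairs with $\widehat b-b$, so no first-order term survives and only cross-products remain. A Cauchy--Schwarz bound on these cross-products, with the weights controlled by positivity, then yields a remainder governed by products of the $L_2$ estimation errors of the nuisance functions, as in the theorem. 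Verifying the exact cancellations---tracking which difference pairs with which and bounding the higher-order terms---is the delicate, computation-heavy step; the remaining arguments are routine bookkeeping.
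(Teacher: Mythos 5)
Your proposal is correct and follows essentially the same route as the paper's proof: the same decomposition into the empirical process at the limiting nuisances, a stochastic-equicontinuity increment handled by the Donsker conditions, and a conditional bias term; the same key factorization of that bias as $(\widehat b - b)$ multiplied by a bracket that vanishes when $(\widehat p,\widehat c)$ equal the truth; and the same Cauchy--Schwarz step yielding products of nuisance $L_2$ errors. The only differences are cosmetic: you treat the $n/n_0$ normalization explicitly via the delta method, noting that it combines with $\mathbb{G}_n(H^*)$ to recover $\mathbb{G}_n(\Phi^1(\eta))$, where the paper simply absorbs this into the $O_P(1)$ term, and your either/or case analysis for consistency under B1 is left implicit in the bias factorization rather than spelled out as the paper's two explicit cases.
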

Theorem \ref{thm-as-rep} shows that the augmented estimator has the rate of convergence
\begin{align*}
\sqrt{n} (\widehat \phi_{aug}(\eta) - \phi(\eta)) &\leq O_P\Bigg(1 +  \sqrt{n}  \Bigg|\Bigg|\frac{\E[L(Y, h(X^*, \widehat \beta))e^{\eta q(Y)}|X,S=1]}{\E[e^{\eta q(Y)}|X,S=1]}-  \widehat b(X; \eta)\Bigg|\Bigg|^2_2 \times ||\Pr[S=1|X] - \widehat p(X)||^2_2 \\
&+ \sqrt{n} \Bigg|\Bigg|\frac{\E[L(Y, h(X^*, \widehat \beta))e^{\eta q(Y)}|X,S=1]}{\E[e^{\eta q(Y)}|X,S=1]}-  \widehat b(X; \eta)\Bigg|\Bigg|^2_2 \times ||\E[e^{\eta q(Y)}|X,S=1]-  \widehat c(X;\eta)||^2_2\Bigg).
\end{align*}
This implies that if the combined rate of $\widehat b(X; \eta)$ and $\widehat p(X)$ is at least $\sqrt{n}$ and the combined rate of $\widehat b(X; \eta)$ and $\widehat c(X;\eta)$ is at least $\sqrt{n}$, then $\widehat \phi_{aug}(\eta)$ is $\sqrt{n}$ convergent. Thus, $\widehat \phi_{aug}(\eta)$ can be $\sqrt{n}$ convergent even if the estimators $\widehat b(X; \eta)$, $\widehat p(X)$, and $\widehat c(X;\eta)$ converge at a slower rate than $\sqrt{n}$ as long as the two conditions on the combined rate of convergence hold (rate robustness \cite{smucler2019unifying}). Hence, augmented estimator can be used with data-adaptive estimators (such as GAMs) while still allowing for asymptotically valid inference \cite{chernozhukov2018double}. In contrast, the conditional loss estimator inherits the rate of convergence of $\widehat b(X; \eta)$. 

Assumption $B1$ in Supplementary Web Appendix \ref{As-Rep-Aug} suggests that only one of $\widehat b(X; \eta)$ or $(\widehat p(X), \widehat c(X;\eta))$ need to be consistent in order for the augmented estimator to be consistent, but not both (model robustness). This result, however, is not as useful as it might appear as implementation of both $\widehat b(X; \eta)$ and $\widehat c(X;\eta)$ relies on specifying the relationship between the outcome and the covariates in the sample from the source population. For example, in the special case of a binary $Y$ and with $q$ as the identity function we can write $\widehat c(X;\eta) = e^{\eta } \widehat g(X) + 1 - \widehat g(X)$ and the augmented estimator as
\begin{equation}
\label{aug-bin}
\widehat \phi_{aug}(\eta) = \frac{1}{n_0} \sum_{i=1}^n \left(I(S_i=0)\widehat b(X_i; \eta) +  \frac{I(S_i=1) (1 - \widehat p(X_i)) e^{\eta Y_i} }{\widehat p(X_i) (e^{\eta } \widehat g(X_i) +  1 - \widehat g(X_i) )} \left(L(Y_i, h(X_i^*, \widehat \beta)) - \widehat b(X_i; \eta) \right)\right),
\end{equation}
with
\begin{equation}
    \label{g-b}
\widehat b(X; \eta) = \left(  \dfrac{  L(1, h(X^*, \widehat \beta)) e^{\eta} \widehat g(X) + L(0, h(X^*, \widehat \beta))  ( 1 - \widehat g(X))  }{  e^{\eta} \widehat g(X) + 1 - \widehat g(X)  } \right).
\end{equation}
So both $\widehat b(X; \eta)$ and $\widehat c(X;\eta)$ rely on specifying $\widehat g(X)$. The following theorem, proved in Supplementary Web Appendix \ref{sec:DR}, shows that if $\widehat \phi_{aug}(\eta)$ is implemented using expression \eqref{aug-bin} with $b(X;\eta)$ estimated using expression \eqref{g-b}, then consistency relies on correctly specifying the model for $\Pr[Y=1|X, S=1]$ but does not rely on correctly specifying the model for $\Pr[S=1|X]$. 
\begin{theorem}
\label{non-nest-dr-2}
If $\widehat g(X) \overset{P}{\longrightarrow} \Pr[Y=1|X, S=1]$ and $\phi(\eta)$ is estimated using expression \eqref{aug-bin} with $b(X;\eta)$ estimated using expression \eqref{g-b}, then $\widehat \phi_{aug}(\eta) \overset{P}{\longrightarrow} \phi(\eta)$ whether the model for $\Pr[S=1|X]$ is correctly specified or not.
\end{theorem}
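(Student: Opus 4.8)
The plan is to compute the probability limit of $\widehat\phi_{aug}(\eta)$ in \eqref{aug-bin} directly and to show it equals $\phi(\eta)$ whenever $\widehat g(X)$ is consistent, irrespective of the limit of $\widehat p(X)$. The estimator splits into an outcome-regression piece supported on $S=0$ and an augmentation piece supported on $S=1$, so the whole argument reduces to showing that the augmentation piece contributes nothing in the limit: its $S=1$ summand is conditionally mean-zero given $(X,S=1)$, a property that holds as soon as $\widehat b(X;\eta)$ converges to the true conditional expectation, which in turn is forced by consistency of $\widehat g(X)$ alone.

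First I would identify the relevant probability limits. Writing $g(X)=\Pr[Y=1\mid X,S=1]$, the hypothesis $\widehat g(X)\overset{P}{\longrightarrow} g(X)$ together with the continuous-mapping theorem applied to \eqref{g-b} gives $\widehat b(X;\eta)\overset{P}{\longrightarrow} b(X;\eta)$, where $b(X;\eta)=\E[L(Y,h(X^*,\widehat\beta))e^{\eta Y}\mid X,S=1]/\E[e^{\eta Y}\mid X,S=1]$ is exactly the conditional expectation appearing in \eqref{eq_mean_identification_S0}; likewise $\widehat c(X;\eta)=e^{\eta}\widehat g(X)+1-\widehat g(X)\overset{P}{\longrightarrow} c(X;\eta)=\E[e^{\eta Y}\mid X,S=1]$. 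Let $p^*(X)$ denote the (possibly incorrect) limit of $\widehat p(X)$, and recall $n/n_0\overset{P}{\longrightarrow}\Pr[S=0]^{-1}$. Under a uniform law of large numbers with bounded, positivity-protected nuisances, these limits can be substituted into the sample average, so that $\widehat\phi_{aug}(\eta)$ converges in probability to
\[
\Pr[S=0]^{-1}\,\E\!\left[ I(S=0)\,b(X;\eta) + \frac{I(S=1)\,(1-p^*(X))\,e^{\eta Y}}{p^*(X)\,c(X;\eta)}\big(L(Y,h(X^*,\widehat\beta))-b(X;\eta)\big) \right].
\]

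Next I would evaluate the two terms. The first equals $\Pr[S=0]^{-1}\,\E[I(S=0)\,b(X;\eta)] = \E[b(X;\eta)\mid S=0] = \phi(\eta)$ by \eqref{eq_mean_identification_S0}. For the second I would take iterated expectations conditioning on $(X,S=1)$, pulling the factor $(1-p^*(X))/\big(p^*(X)\,c(X;\eta)\big)$ outside the inner expectation; the inner conditional expectation is $\E\!\left[e^{\eta Y}\left(L(Y,h(X^*,\widehat\beta))-b(X;\eta)\right)\mid X,S=1\right] = \E[L\,e^{\eta Y}\mid X,S=1] - b(X;\eta)\,c(X;\eta)$, which vanishes identically because $b(X;\eta)=\E[L\,e^{\eta Y}\mid X,S=1]/c(X;\eta)$ by definition. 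Hence the augmentation term is zero for \emph{any} positive $p^*(X)$, and the probability limit collapses to $\phi(\eta)$.

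The crux, and the reason correct specification of $\Pr[S=1\mid X]$ is not needed, is that in the implementation \eqref{aug-bin}--\eqref{g-b} the single nuisance $\widehat g(X)$ drives both $\widehat b(X;\eta)$ and $\widehat c(X;\eta)$, so consistency of $\widehat g(X)$ makes $\widehat b(X;\eta)$ consistent for the true conditional expectation, and the possibly-misspecified $\widehat p(X)$ enters only as a multiplier of a conditionally mean-zero residual. I expect the main obstacle to be technical rather than conceptual: justifying the interchange of limit and expectation, i.e.\ replacing the sample average with estimated nuisances by the population expectation at the limiting nuisances. This requires a positivity/boundedness condition keeping $\widehat p(X)$ and $\widehat c(X;\eta)$ bounded away from zero and the loss bounded (so that dominated convergence or a Glivenko--Cantelli argument applies), together with the paper's standing assumption that $\widehat\beta$ is fit on an independent sample so that $L(Y,h(X^*,\widehat\beta))$ behaves as a fixed function in the evaluation sample; granting these, the cancellation above delivers the result.
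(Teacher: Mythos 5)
Your proposal is correct and follows essentially the same route as the paper's proof: pass to the probability limit with $p^*(X)$ unrestricted, note that consistency of $\widehat g(X)$ forces $\widehat b(X;\eta)$ and $\widehat c(X;\eta)$ to converge to the true $\E[L(Y, h(X^*, \widehat \beta))e^{\eta Y}\mid X,S=1]/\E[e^{\eta Y}\mid X,S=1]$ and $\E[e^{\eta Y}\mid X,S=1]$, and then kill the augmentation term by iterated expectations given $X$. The only cosmetic difference is that you show the residual $e^{\eta Y}\bigl(L(Y, h(X^*, \widehat \beta))-b(X;\eta)\bigr)$ is conditionally mean zero in one step, whereas the paper computes the $L$-term and the $b^*$-term separately and observes they coincide; this is the same cancellation.
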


The augmented estimator $\widehat \phi_{aug}(\eta)$ (for binary, count, and continuous outcomes) has the advantage over the conditional loss estimator of being able to accommodate more flexible modeling of the nuisance parameters $\E[Y=1|X, S=1]$ and $\Pr[S=1|X]$ while allowing for asymptotically valid inference \cite{chernozhukov2018double}. This is important because subject matter knowledge is often inadequate for specifying parametric models for $\E[Y|X, S=1]$ and $\Pr[S=1|X]$. 

\subsection{Alternative parameterization of the selection model}

If we parameterize the sensitivity analysis using expression \eqref{model_selection_equiv_to_tilt}
\[
	\mbox{logit} \big(\Pr[S = 0 | X , Y = y ]\big) =  a(X;\eta) + \eta q(y),
\]
where $a(X;\eta) = \mbox{logit} \big(\Pr[S = 0 | X]\big) -  \ln (\E [ e^{\eta q(Y)} | X, S = 1 ])$, the augmented estimator can be written as
\begin{equation}
\label{aug-a}
\widehat \phi_{aug}(\eta) = \frac{1}{n_0} \sum_{i=1}^n \left(I(S_i=0)\widehat b(X_i; \eta) +  I(S_i=1) e^{\widehat a(X_i;\eta) + \eta q(Y_i)} \left(L(Y_i, h(X_i^*, \widehat \beta)) - \widehat b(X_i; \eta) \right)\right),
\end{equation}
where $\widehat a(X;\eta)$ is an estimator for $a(X;\eta)$. 

In addition to estimating $b(X;\eta)$, implementation of the augmented estimator using expression \eqref{aug-a} also requires estimating $a(X;\eta)$. To estimate $a(X;\eta)$ we start by noting that \cite{rotnitzky1998semiparametric, dahabreh2022global}
\begin{equation*}
\E\left[\frac{I(S=1)e^{a(X;\eta) + \eta q(Y)}}{\Pr[S=0]} - 1\right] = 0.
\end{equation*}
If we posit a parametric model $a(X, \theta; \eta)$ for $a(X;\eta)$ where $\theta$ is a finite dimensional parameter, then $\theta$ can be estimated using a generalized methods of moments estimator \cite{newey1994large}, which for a correctly specified parametric model is consistent under a mild conditions \cite{hansen1982large}. 

In Supplementary Web Appendix \ref{sec:DR} we prove the following doubly robustness property of $\widehat \phi_{aug}(\eta)$ when the alternative parameterization is used.
\begin{theorem}
\label{non-nest-dr-1}
If at least one of $\widehat b(X; \eta)$ or $\widehat a(X;\eta)$ are consistent, then the augmented estimator given by expression \eqref{aug-a} in non-nested designs is consistent (i.e.,~$\widehat \phi_{aug}(\eta) \overset{P}{\longrightarrow} \phi(\eta)$).
\end{theorem}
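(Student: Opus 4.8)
The plan is to reduce the consistency claim to an algebraic identity. Let $a^*(X;\eta)$ and $b^*(X;\eta)$ denote the probability limits of $\widehat a(X;\eta)$ and $\widehat b(X;\eta)$. Writing $\tfrac{1}{n_0} = \tfrac{n}{n_0}\cdot\tfrac{1}{n}$ and combining $n/n_0 \overset{P}{\longrightarrow} \Pr[S=0]^{-1}$ with a law of large numbers for the empirical average in expression \eqref{aug-a}, together with a Slutsky/continuous-mapping argument that lets me replace the estimated nuisances $\widehat a,\widehat b$ by their limits, I would first establish
\[
\widehat\phi_{aug}(\eta) \overset{P}{\longrightarrow} \phi^*(\eta) := \frac{1}{\Pr[S=0]}\E\Big[I(S=0)b^*(X;\eta) + I(S=1)e^{a^*(X;\eta)+\eta q(Y)}\big(L(Y,h(X^*,\widehat\beta)) - b^*(X;\eta)\big)\Big].
\]
It then remains to show $\phi^*(\eta) = \phi(\eta)$ whenever either $b^* = b(\cdot;\eta)$ or $a^* = a(\cdot;\eta)$.

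The single most useful ingredient is the conditional moment identity implied by the definition $a(X;\eta) = \mathrm{logit}(\Pr[S=0|X]) - \ln(\E[e^{\eta q(Y)}|X,S=1])$. Pulling $e^{a(X;\eta)} = \tfrac{\Pr[S=0|X]}{\Pr[S=1|X]}\,\E[e^{\eta q(Y)}|X,S=1]^{-1}$ out of the inner source-population expectation yields
\[
\E\big[I(S=1)e^{a(X;\eta)+\eta q(Y)} \,\big|\, X\big] = \Pr[S=0|X],
\]
which I would prove once and reuse. For the case $b^* = b(\cdot;\eta)$, the leading term of $\phi^*(\eta)$ already equals $\phi(\eta)$, since $\tfrac{1}{\Pr[S=0]}\E[I(S=0)b(X;\eta)] = \E[b(X;\eta)\mid S=0] = \phi(\eta)$; so it suffices to show the augmentation term vanishes for an arbitrary (possibly misspecified) $a^*$. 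Conditioning on $(X,S=1)$ and using $b(X;\eta)\,\E[e^{\eta q(Y)}|X,S=1] = \E[L(Y,h(X^*,\widehat\beta))e^{\eta q(Y)}|X,S=1]$ gives $\E[e^{\eta q(Y)}(L(Y,h(X^*,\widehat\beta)) - b(X;\eta))\mid X,S=1] = 0$, which kills the augmentation irrespective of $a^*$.

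For the case $a^* = a(\cdot;\eta)$, I would instead group the two terms that multiply $b^*$ and show they cancel, so that the dependence on the possibly wrong $b^*$ drops out: since $b^*$ is a function of $X$ alone, the moment identity gives $\E[I(S=1)e^{a(X;\eta)+\eta q(Y)}b^*(X;\eta)] = \E[\Pr[S=0|X]\,b^*(X;\eta)] = \E[I(S=0)b^*(X;\eta)]$. What survives is $\tfrac{1}{\Pr[S=0]}\E[I(S=1)e^{a(X;\eta)+\eta q(Y)}L(Y,h(X^*,\widehat\beta))]$, and conditioning on $(X,S=1)$ and substituting the expression for $e^{a(X;\eta)}$ reduces this to $\tfrac{1}{\Pr[S=0]}\E[\Pr[S=0|X]\,b(X;\eta)] = \E[b(X;\eta)\mid S=0] = \phi(\eta)$.

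The routine algebra of these cancellations is not the difficulty; the main obstacle is the first step, namely rigorously justifying the passage to the deterministic limit $\phi^*(\eta)$. This requires integrability/boundedness control on the loss, on the weights $e^{\eta q(Y)}$, and on the estimated nuisances, so that the difference between plugging in $\widehat a,\widehat b$ and plugging in $a^*,b^*$ is uniformly negligible and a law of large numbers together with Slutsky's theorem applies. I would invoke the regularity conditions of Supplementary Web Appendix \ref{As-Rep-Aug} (and the stated consistency of $n_0/n$) to close this gap.
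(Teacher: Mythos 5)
Your proposal is correct and follows essentially the same route as the paper's proof: the same passage to the probability limit $\phi^*(\eta)$, the same two-case split, and the same iterated-expectation cancellations (your reusable identity $\E[I(S=1)e^{a(X;\eta)+\eta q(Y)}\mid X]=\Pr[S=0\mid X]$ is exactly what the paper computes inline in each case). Your closing remark that the limit step needs the regularity conditions of Appendix \ref{As-Rep-Aug} is a fair observation; the paper likewise asserts that convergence without further elaboration.
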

As the approach for estimating $\theta$ relies on specifying a parametric model for $a(X;\eta)$, the estimator cannot be used in combination with the more flexible modeling of $\widehat b(X; \eta)$ and $\widehat a(X;\eta)$.

\subsection{Estimation in the sensitivity analysis model for nested designs}

For nested designs, using the sample-analog of expression \eqref{eq_mean_identification} gives the conditional loss estimator for the risk in the target population under a nested design
\[
\widehat \psi_{cl}(\eta) = \frac{1}{n} \sum_{i=1}^n I(S_i = 1) L(Y_i, h(X^*_i, \widehat \beta))  + \frac{1}{n}  \sum_{i=1}^{n} I(S_i=0) \widehat b(X_i; \eta).
\]
The first term on the right hand side of the above equation represents contributions from observations from $S=1$ and the second term is a sum over observations with each term in the sum being an estimator for the conditional risk under our sensitivity analysis model.

In Supplementary Web Appendix \ref{appendix:influence_functions} we derive the non-parametric influence function under a nested design and the corresponding augmented estimator is
\begin{align}
\widehat \psi_{aug}(\eta) &= \frac{1}{n} \sum_{i=1}^n \Bigg( S_i L(Y_i, h(X_i^*, \widehat \beta)) \nonumber \\  &+ I(S_i=0) \widehat b(X_i; \eta) + \dfrac{I(S_i = 1) (1 - \widehat p(X_i)) e^{\eta q(Y_i)} \left(L(Y_i, h(X_i^*, \widehat \beta)) - \widehat b(X_i; \eta) \right)}{\widehat p(X_i) \widehat c(X_i;\eta)} \Bigg). \label{aug-nest}
\end{align}
In Supplementary Web Appendix \ref{app-nest} we show that the augmented estimator for nested designs has the same robustness and rate of convergence properties as the augmented estimator for non-nested designs.



\section{Selecting sensitivity parameter values using external information}
\label{sec:sel-par}

When performing sensitivity analysis, selection of an appropriate range of the sensitivity parameter can be challenging. Now we show how we can base this choice on background knowledge of the marginal probability of the outcome in the target population.

Suppose that on the basis of substantive knowledge or prior research the expectation of the outcome in the target population $\E[Y | S = 0] = \mu$  is known. Then, even without individual-level outcome information from the target population sample, we may estimate $\eta$ as the solution to the sample analog of the following population estimating equation: 
\begin{equation}
  \int \int y  \dfrac{ e^{\eta q(y)} f_{Y | X, S}(y| x, s = 1) }{ \E [ e^{\eta q(Y)} | X = x, S = 1]} f_{X|S}(x | s = 0) dy dx - \mu = 0.
\end{equation}

For example, for binary outcome $Y$ and setting $q$ as the identity function, we can search for the $\eta$ value that solves
\begin{equation}
\label{exp-1}
 \sum_{i=1}^{n} I(S_i=0) \left\{  \dfrac{  e^{\eta} \widehat g(X_i) }{  e^{\eta } \widehat g(X_i) +  \{ 1 - \widehat g (X_i) \}  } - \mu \right\} = 0.
\end{equation}
Of course, perfect knowledge of $\mu$ may not be available in practical applications, but if a good-enough approximation to the true value of $\mu$ can be obtained, it helps anchor the sensitivity analysis by selecting a reasonable range of $\eta$ values to explore in sensitivity analyses. 

For the nested design, if the marginal prevalence rate in the target population $\E[Y] = \alpha$, we may estimate $\eta$ as the solution to the sample analog of the following population estimating equation:
\begin{align*}
  \int \int y f_{Y | X, S}(y| x, s = 1)  f_X(x)  \left(\Pr[S=1|X=x] +\dfrac{ e^{\eta q(y)} \Pr[S=0|X=x]}{ \E [ e^{\eta q(Y)} | X = x, S = 1]} \right)dy dx - \alpha = 0.   
\end{align*}

\section{Sensitivity analysis using the Coronary Artery Surgery Study data}

We used data from the Coronary Artery Surgery Study (CASS) to illustrate the sensitivity analysis methods. 

\noindent \textbf{Study design and data:} CASS \cite{killip1983coronary} included a randomized trial of treatments for stable coronary artery disease that was nested within a cohort study of trial-eligible individuals. Of a total of 2099 trial eligible individuals, 780 were included in the randomized trial and 1319 declined to be randomized and were included in an observational study. In the trial, participants were randomly assigned to coronary artery surgery plus medical therapy (hereafter referred to as the surgery arm) versus medical therapy alone; the same treatments were used among non-randomized participants. Risk stratification is commonly of interest in randomized trials; one approach is to develop a risk model using observational data and then apply the estimated risk model in the randomized trial (e.g., to examine heterogeneity of treatment effects over predicted risk \cite{kent2007limitations, dahabreh2023toward}). Here, to illustrate the methods, we used the 430 participants in the observational part of CASS that received surgery as the sample from the source population and used covariate data from the participants in the randomized component that received surgery as the sample from the target population. We used death within 10-years from study entry as the outcome if interest. For simplicity and as previous analysis of the same data has shown limited impact of adjusting for missing data \cite{dahabreh2018generalizing}, we restricted the analysis to participants with complete covariate information (368 randomized and 955 non-randomized). Table \ref{cass_baseline} shows the distribution of the baseline covariates stratified by randomization status.
\\
\\
\noindent \textbf{Implementation:} We randomly split the dataset of non-randomized participants into two disjoint and approximately equally sized datasets. The data from the first dataset were used to fit the prediction model $h(X^*, \beta)$, a logistic regression model that included all the covariates listed in Table \ref{cass_baseline} as linear main effects (on the logit scale). The data from the second dataset (here used as the source population sample) were combined with covariate information from the randomized participants (here used as the sample from the target population) and the combined dataset was used to estimate the Brier risk in the target population using expression \eqref{aug-bin}. The vector of covariates $X$, used to satisfy the conditional transportability condition, included all the covariates listed in Table \ref{cass_baseline} (i.e.,~in this analysis we set $X = X^*$). The nuisance function $c(X;\eta)$ was estimated using the formula $\widehat c(X;\eta) = e^{\eta} \widehat g(X) + ( 1 - \widehat g(X))$ and $\widehat b(X;\eta)$ was estimated using expression \eqref{g-b}. The estimators for $\Pr[Y = 1 | X, S = 1]$ and $\Pr[S=1|X]$ needed to obtain $\widehat c(X;\eta)$ and $\widehat b(X;\eta)$ were based on logistic regression models with linear main effects of the variables listed in Table \ref{cass_baseline} and we used the non-parametric bootstrap with 1,000 bootstrap replicates to estimate Wald-style $95\%$ point-wise confidence intervals. 

The 10-year risk (cumulative incidence proportion) among non-randomized participants was $\widehat \mu = 0.186$ and we selected $\eta$ based on the range of the prevalence rate $[\widehat \mu/2, 2 \widehat \mu] = [0.093, 0.372]$ using expression \eqref{exp-1} and we implemented the augmented estimator using $\eta$ values in the corresponding range with increments of $0.05$ (resulting in a range of $\eta$ from [$-0.95,1.25$]). As a stability analysis, we i) estimated $\Pr[Y = 1 | X, S = 1]$ and $\Pr[S=1|X]$ using generalized additive models with the main effects of all continuous covariates (age and ejection fraction) modeled using B-splines (basis splines) and ii) used the jackknife to estimate the $95\%$ point-wise confidence intervals, with the same range of $\eta$ as the main analysis.
\\
\\
\noindent \textbf{Results:} Figure \ref{CassGlm} shows estimates of the Brier risk in the target population and associated 95\% confidence intervals over a range of values of $\eta$ when the non-parametric bootstrap was used for confidence interval construction and $\Pr[Y = 1 | X, S = 1]$ and $\Pr[S=1|X]$ were estimated using linear main effects logistic regression models. The estimates for the Brier risk ranged from 0.11 to 0.27.

Figures \ref{CassGlmJK}, \ref{GamBS}, and \ref{GamJK} in Supplementary Web Appendix \ref{CASS:SWA} show the results of the stability analysis using a generalized additive model to estimate both $\Pr[Y = 1 | X, S = 1]$ and $\Pr[S=1|X]$ and using the jackknife to construct confidence intervals. Generalized additive models produced results similar to those produced by the logistic regression models; the jackknife resulted in slightly narrower confidence intervals compared to the non-parametric bootstrap.

\clearpage

\begin{table}[ht!]
\caption{Baseline characteristics in CASS (August 1975 to December 1996). $S=0$ indicates  randomized participants that received surgery and $S=1$ indicates people in the observational component of CASS that received surgery.}
\label{cass_baseline}
\centering
\small
\begin{tabular}{@{}lll@{}}
\toprule
                     & $S=0$            & $S=1$                      \\ \midrule
Number of patients                    & 337            & 430             \\
Age      & 51.42 (7.21)   & 51.29 (7.68)               \\
History of angina      & 264 (78.3)     & 360 (83.7)         \\
Taken beta-blocker regularly        & 152 (45.1)     & 241 (56.0)        \\
Taken diuretic regularly     & 58 (17.2)      & 60 (14.0)         \\
Ejection fraction   & 60.61 (12.78)  & 60.20 (11.96)      \\
Employed full-time     & 237 (70.3)     & 286 (66.5)          \\
Type of job           &                &                     \\
\hspace{3mm} High physical labor job                   & 136 (40.4)     & 158 (36.7)       \\
\hspace{3mm} Low mental labor job                    & 119 (35.3)     & 134 (31.2)        \\
\hspace{3mm} High mental labor job                   & 82 (24.3)      & 138 (32.1)       \\
Left ventricular wall score   & 7.42 (2.83)    & 7.08 (2.73)       \\
Taken nitrates regularly      & 194 (57.6)     & 253 (58.8)         \\
History of MI     & 189 (56.1)     & 236 (54.9)      \\
Female         & 34 (10.1)      & 39 (9.1)           \\
Smoking status          &                &                     \\
\hspace{3mm} Never smoked                    & 56 (16.6)      & 76 (17.7)          \\
\hspace{3mm} Former smoker                    & 148 (43.9)     & 211 (49.1)    \\
\hspace{3mm} Current smoker                    & 133 (39.5)     & 143 (33.3)         \\
High limitation of activities   & 157 (46.6)     & 201 (46.7)        \\
High recreational activity    & 207 (61.4)     & 285 (66.3)        \\
Confirmed hypertension   & 104 (30.9)     & 109 (25.3)        \\
Confirmed diabetes & 0.03 (0.16)    & 0.03 (0.17)        \\
LMCA percent obstruction   & 3.66 (10.80)   & 7.95 (17.16)      \\
PLMA percent obstruction & 36.35 (38.26)  & 48.18 (39.81)      \\
Any diseased proximal vessels   & 205 (60.8)     & 317 (73.7)         \\
Systolic blood pressure  & 129.13 (17.85) & 129.85 (17.83)     \\ \bottomrule
\end{tabular}
\caption*{LMCA = left main coronary artery; MI = myocardial infarction; PLMA = proximal left anterior artery. For continuous variables we report the mean (standard deviation); for binary variables we report the number of patients (percentage).}
\end{table}

\clearpage 
\vspace{0.3in}
\begin{figure}[ht!]
    \centering
    \includegraphics[width=0.8\textwidth]{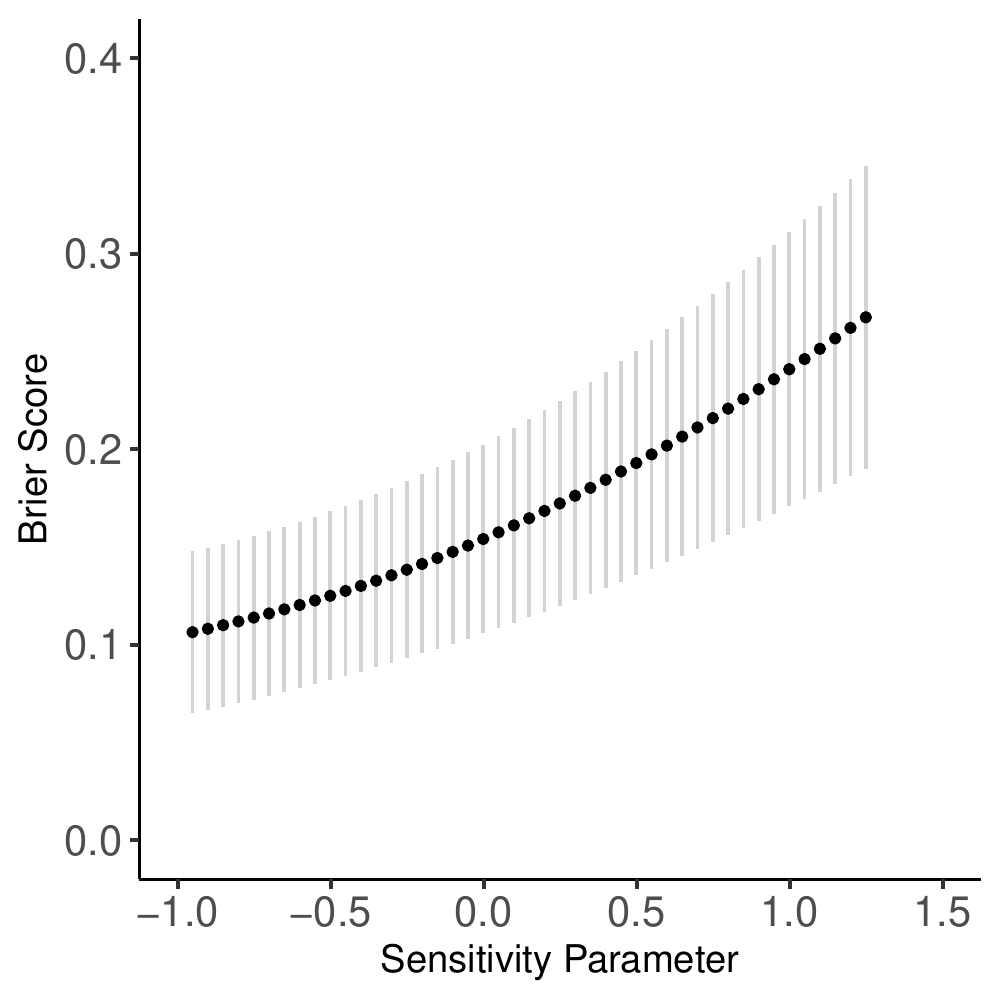}
    \caption{Sensitivity analysis using CASS data. The values used for the sensitivity parameter ($\eta$) are on the x-axis and the corresponding estimates of the Brier risk calculated using the augmented estimator for a non-nested design are on the y-axis ($\widehat \phi_{aug}(\eta)$). The nuisance functions $\Pr[Y = 1 | X, S = 1]$ and $\Pr[S=1|X]$ were estimated using logistic regression models and 95\% confidence intervals were calculated using the non-parametric bootstrap with $1,000$ bootstrap replicates. The solid line connects point estimates and the gray lines are point-wise 95\% confidence intervals.}
    \label{CassGlm}
\end{figure}

\section{Discussion}

We considered the problem of estimating model performance in a target population that differs from the source population used for model development or model evaluation, when information on covariates, but not outcomes, is available from the target population. In much of the literature studying this setting, methods for tailoring prediction models and for evaluating performance in the target population rely on a conditional exchangeability assumption that the available covariates are sufficient to render the outcome independent of population (source or target population). When subject matter knowledge is insufficient to determine whether the assumption is plausible, analysts need to evaluate how its violations would impact the findings. Here, we developed a global sensitivity analysis approach to violations of the conditional exchangeability condition using an exponential tilt model. We derived two sensitivity analysis estimators: a plug-in estimator and an augmented estimator that is obtained from the non-parametric influence function under the sensitivity analysis model. We suggested an approach for selecting a reasonable range of values for the sensitivity parameters based on background knowledge about the prevalence rate in the target population. Last, we applied the methods to data on individuals with stable ischemic heart disease undergoing coronary revascularization surgery.

Our approach addresses a key limitation of methods for transporting prediction models and assessing their performance in a target population. Future research could address issues such as missing data (other than the outcome data in the target population), failure-time outcomes \cite{steingrimsson2022extending}, and measurement error, or extensions to more complex measures of model performance such as the area under the receiver operating characteristic curve \cite{li2022estimating}.

The methods proposed here have the advantage of relating the sensitivity parameter to the marginal probability of the outcome in the target population, allowing investigators to choose an initial sensitivity parameter value by drawing on background knowledge about the target population. Depending on how sharp this knowledge is, analysts can choose to expand the sensitivity analysis to cover an appropriately dispersed set of additional sensitivity parameter values around the one implied by the postulated marginal probability in the target population.

\clearpage
\bibliographystyle{ieeetr}
\bibliography{references}

\begin{thebibliography}{10}

\bibitem{steingrimsson2023transporting}
J.~A. Steingrimsson, C.~Gatsonis, B.~Li, and I.~J. Dahabreh, ``Transporting a
  prediction model for use in a new target population,'' {\em American Journal
  of Epidemiology}, vol.~192, no.~2, pp.~296--304, 2023.

\bibitem{shimodaira2000improving}
H.~Shimodaira, ``Improving predictive inference under covariate shift by
  weighting the log-likelihood function,'' {\em Journal of Statistical Planning
  and Inference}, vol.~90, no.~2, pp.~227--244, 2000.

\bibitem{sugiyama2007covariate}
M.~Sugiyama, M.~Krauledat, and K.-R. M{\~A}{\v{z}}ller, ``Covariate shift
  adaptation by importance weighted cross validation,'' {\em Journal of Machine
  Learning Research}, vol.~8, no.~May, pp.~985--1005, 2007.

\bibitem{sugiyama2012machine}
M.~Sugiyama and M.~Kawanabe, {\em Machine learning in non-stationary
  environments: introduction to covariate shift adaptation}.
\newblock MIT press Cambridge Massachusetts, 2012.

\bibitem{robins2000c}
J.~M. Robins, A.~Rotnitzky, and D.~O. Scharfstein, ``Sensitivity analysis for
  selection bias and unmeasured confounding in missing data and causal
  inference models,'' in {\em Statistical models in epidemiology, the
  environment, and clinical trials}, pp.~1--94, Springer, 2000.

\bibitem{rotnitzky1998semiparametric}
A.~Rotnitzky, J.~M. Robins, and D.~O. Scharfstein, ``Semiparametric regression
  for repeated outcomes with nonignorable nonresponse,'' {\em Journal of the
  American Statistical Association}, vol.~93, no.~444, pp.~1321--1339, 1998.

\bibitem{scharfstein2018globalBiometrics}
D.~Scharfstein, A.~McDermott, I.~D{\'\i}az, M.~Carone, N.~Lunardon, and
  I.~Turkoz, ``Global sensitivity analysis for repeated measures studies with
  informative drop-out: A semi-parametric approach,'' {\em Biometrics},
  vol.~74, no.~1, pp.~207--219, 2018.

\bibitem{scharfstein2018globalSMMR}
D.~O. Scharfstein and A.~McDermott, ``Global sensitivity analysis of clinical
  trials with missing patient-reported outcomes,'' {\em Statistical Methods in
  Medical Research}, vol.~28, no.~5, pp.~1439--1456, 2018.

\bibitem{scharfstein2021global}
D.~O. Scharfstein, J.~Steingrimsson, A.~McDermott, C.~Wang, S.~Ray,
  A.~Campbell, E.~Nunes, and A.~Matthews, ``Global sensitivity analysis of
  randomized trials with nonmonotone missing binary outcomes: Application to
  studies of substance use disorders,'' {\em Biometrics}, 2021.

\bibitem{daniels2008missing}
M.~J. Daniels and J.~W. Hogan, {\em Missing data in longitudinal studies:
  Strategies for Bayesian modeling and sensitivity analysis}.
\newblock chapman and hall/CRC, 2008.

\bibitem{cornfield1959smoking}
J.~Cornfield, W.~Haenszel, E.~C. Hammond, A.~M. Lilienfeld, M.~B. Shimkin, and
  E.~L. Wynder, ``Smoking and lung cancer: recent evidence and a discussion of
  some questions,'' {\em Journal of the National Cancer Institute}, vol.~22,
  no.~1, pp.~173--203, 1959.

\bibitem{brumback2004sensitivity}
B.~A. Brumback, M.~A. Hern{\'a}n, S.~J. Haneuse, and J.~M. Robins,
  ``Sensitivity analyses for unmeasured confounding assuming a marginal
  structural model for repeated measures,'' {\em Statistics in Medicine},
  vol.~23, no.~5, pp.~749--767, 2004.

\bibitem{klungsoyr2009}
O.~Klungs{\o}yr, J.~Sexton, I.~Sandanger, and J.~F. Nyg{\aa}rd, ``Sensitivity
  analysis for unmeasured confounding in a marginal structural cox proportional
  hazards model,'' {\em Lifetime Data Analysis}, vol.~15, no.~2, pp.~278--294,
  2009.

\bibitem{rosenbaum1983assessing}
P.~R. Rosenbaum and D.~B. Rubin, ``Assessing sensitivity to an unobserved
  binary covariate in an observational study with binary outcome,'' {\em
  Journal of the Royal Statistical Society. Series B}, vol.~45, no.~2,
  pp.~212--218, 1983.

\bibitem{scharfstein2021semiparametric}
D.~O. Scharfstein, R.~Nabi, E.~H. Kennedy, M.-Y. Huang, M.~Bonvini, and
  M.~Smid, ``Semiparametric sensitivity analysis: Unmeasured confounding in
  observational studies,'' {\em arXiv preprint arXiv:2104.08300}, 2021.

\bibitem{nguyen2017sensitivity}
T.~Q. Nguyen, C.~Ebnesajjad, S.~R. Cole, and E.~A. Stuart, ``Sensitivity
  analysis for an unobserved moderator in rct-to-target-population
  generalization of treatment effects,'' {\em The Annals of Applied
  Statistics}, pp.~225--247, 2017.

\bibitem{nguyen2018sensitivity}
T.~Q. Nguyen, B.~Ackerman, I.~Schmid, S.~R. Cole, and E.~A. Stuart,
  ``Sensitivity analyses for effect modifiers not observed in the target
  population when generalizing treatment effects from a randomized controlled
  trial: Assumptions, models, effect scales, data scenarios, and implementation
  details,'' {\em PloS one}, vol.~13, no.~12, p.~e0208795, 2018.

\bibitem{dahabreh2019sensitivitybiascor}
I.~J. Dahabreh, J.~M. Robins, S.~J.-P. Haneuse, I.~Saeed, S.~E. Robertson,
  E.~A. Stuart, and M.~A. Hern\'an, ``Sensitivity analysis using bias functions
  for studies extending inferences from a randomized trial to a target
  population,'' {\em arXiv preprint arXiv:1905.10684}, 2019.

\bibitem{dahabreh2022global}
I.~J. Dahabreh, J.~M. Robins, S.~J. Haneuse, S.~E. Robertson, J.~A.
  Steingrimsson, and M.~A. Hern{\'a}n, ``Global sensitivity analysis for
  studies extending inferences from a randomized trial to a target
  population,'' {\em arXiv e-prints}, pp.~arXiv--2207, 2022.

\bibitem{duong2023sensitivity}
N.~Q. Duong, A.~J. Pitts, S.~Kim, and C.~H. Miles, ``Sensitivity analysis for
  transportability in multi-study, multi-outcome settings,'' {\em arXiv
  preprint arXiv:2301.02904}, 2023.

\bibitem{scharfstein2014global}
D.~Scharfstein, A.~McDermott, W.~Olson, and F.~Wiegand, ``Global sensitivity
  analysis for repeated measures studies with informative dropout: A fully
  parametric approach,'' {\em Statistics in Biopharmaceutical Research},
  vol.~6, no.~4, pp.~338--348, 2014.

\bibitem{dahabreh2021study}
I.~J. Dahabreh, S.~J.~A. Haneuse, J.~M. Robins, S.~E. Robertson, A.~L.
  Buchanan, E.~A. Stuart, and M.~A. Hern{\'a}n, ``Study designs for extending
  causal inferences from a randomized trial to a target population,'' {\em
  American journal of epidemiology}, vol.~190, no.~8, pp.~1632--1642, 2021.

\bibitem{dahabreh2020studydesigns}
I.~J. Dahabreh, S.~J.-P. Haneuse, J.~M. Robins, S.~E. Robertson, A.~L.
  Buchanan, E.~A. Stuart, and M.~A. Hern\'an, ``Study designs for extending
  causal inferences from a randomized trial to a target population,'' {\em
  American Journal of Epidemiology (in press)}, 2020.

\bibitem{robins1988confidence}
J.~M. Robins, ``Confidence intervals for causal parameters,'' {\em Statistics
  in Medicine}, vol.~7, no.~7, pp.~773--785, 1988.

\bibitem{dahabreh2019efficient}
I.~J. Dahabreh, S.~E. Robertson, L.~C. Petito, M.~A. Hern{\'a}n, and J.~A.
  Steingrimsson, ``Efficient and robust methods for causally interpretable
  meta-analysis: Transporting inferences from multiple randomized trials to a
  target population,'' {\em Biometrics}, 2019.

\bibitem{brier1950verification}
G.~W. Brier, ``Verification of forecasts expressed in terms of probability,''
  {\em Monthly weather review}, vol.~78, no.~1, pp.~1--3, 1950.

\bibitem{petersen2012diagnosing}
M.~L. Petersen, K.~E. Porter, S.~Gruber, Y.~Wang, and M.~J. van~der Laan,
  ``Diagnosing and responding to violations in the positivity assumption,''
  {\em Statistical Methods in Medical Research}, vol.~21, no.~1, pp.~31--54,
  2012.

\bibitem{robins1986}
J.~M. Robins, ``A new approach to causal inference in mortality studies with a
  sustained exposure period -- application to control of the healthy worker
  survivor effect,'' {\em Mathematical Modelling}, vol.~7, no.~9,
  pp.~1393--1512, 1986.

\bibitem{morrison2021}
S.~Morrison, C.~Gatsonis, I.~J. Dahabreh, B.~Li, and J.~A. Steingrimsson,
  ``Robust estimation of loss-based measures of model performance under
  covariate shift,'' {\em arXiv preprint arXiv:2210.01980}, 2022.

\bibitem{bickel1993efficient}
P.~J. Bickel, C.~A. Klaassen, J.~A. Wellner, and Y.~Ritov, {\em Efficient and
  adaptive estimation for semiparametric models}.
\newblock Johns Hopkins University Press Baltimore, 1993.

\bibitem{smucler2019unifying}
E.~Smucler, A.~Rotnitzky, and J.~M. Robins, ``A unifying approach for
  doubly-robust l1 regularized estimation of causal contrasts,'' {\em arXiv
  preprint arXiv:1904.03737}, 2019.

\bibitem{chernozhukov2018double}
V.~Chernozhukov, D.~Chetverikov, M.~Demirer, E.~Duflo, C.~Hansen, W.~Newey, and
  J.~Robins, ``Double/debiased machine learning for treatment and structural
  parameters,'' {\em The Econometrics Journal}, vol.~21, no.~1, pp.~C1--C68,
  2018.

\bibitem{newey1994large}
W.~K. Newey and D.~McFadden, ``Large sample estimation and hypothesis
  testing,'' in {\em Handbook of econometrics, volume 4} (R.~F. Engle and D.~L.
  McFadden, eds.), pp.~2111--2245, Amsterdam, The Netherlands: Elsevier B.V.,
  1994.

\bibitem{hansen1982large}
L.~P. Hansen, ``Large sample properties of generalized method of moments
  estimators,'' {\em Econometrica: Journal of the econometric society},
  pp.~1029--1054, 1982.

\bibitem{killip1983coronary}
T.~Killip, E.~Passamani, and K.~Davis, ``Coronary artery surgery study (cass):
  a randomized trial of coronary artery bypass surgery,'' {\em Survival data.
  Circulation}, vol.~68, pp.~939--50, 1983.

\bibitem{kent2007limitations}
D.~M. Kent and R.~A. Hayward, ``Limitations of applying summary results of
  clinical trials to individual patients: the need for risk stratification,''
  {\em Jama}, vol.~298, no.~10, pp.~1209--1212, 2007.

\bibitem{dahabreh2023toward}
I.~J. Dahabreh and D.~S. Kazi, ``Toward personalizing care: assessing
  heterogeneity of treatment effects in randomized trials,'' {\em JAMA},
  vol.~329, no.~13, pp.~1063--1065, 2023.

\bibitem{dahabreh2018generalizing}
I.~J. Dahabreh, S.~E. Robertson, E.~J. Tchetgen~Tchetgen, E.~A. Stuart, and
  M.~A. Hern{\'a}n, ``Generalizing causal inferences from individuals in
  randomized trials to all trial-eligible individuals,'' {\em Biometrics},
  vol.~75, no.~2, pp.~685--694, 2018.

\bibitem{steingrimsson2022extending}
J.~A. Steingrimsson, ``Extending prediction models for use in a new target
  population with failure time outcomes,'' {\em Biostatistics}, 2022.

\bibitem{li2022estimating}
B.~Li, C.~Gatsonis, I.~J. Dahabreh, and J.~A. Steingrimsson, ``Estimating the
  area under the roc curve when transporting a prediction model to a target
  population,'' {\em Biometrics}, 2022.

\bibitem{van2000asymptotic}
A.~W. Van~der Vaart, {\em Asymptotic statistics}, vol.~3.
\newblock Cambridge University Press, 2000.

\end{thebibliography}


\appendix
\clearpage

\section{Identification of measures of loss-based loss performance under the sensitivity analysis model}\label{appendix:identification}
\renewcommand{\theequation}{\thesection.\arabic{equation}}
\setcounter{equation}{0}

Under the sensitivity analysis model in Section \ref{sec:sensitivity_analysis} of the main text, the outcome density in the target population can be written as
\begin{equation*}
  f_{Y | X, S}(y | x, s = 0) =  \frac{e^{ \eta q(y)} f_{Y | X,  S}(y| x , s = 1)}{ \E \big[ e^{\eta q(Y)} | X = x, S = 1 \big] }, \eta \in \mathbb R,
\end{equation*} 
where $q$ is an unspecified deterministic increasing function.
This implies that,
\begin{equation}
\E[L(Y, h(X^*, \widehat \beta)) | X, S = 0] = \dfrac{\E \big[ L(Y, h(X^*, \widehat \beta)) e^{\eta q(Y)} | X, S = 1 \big]}{\E \big[ e^{\eta q(Y)} | X, S = 1\big]}. \label{rel-S}
\end{equation}
For non-nested designs the relationship in equation \eqref{rel-S} gives that
\begin{equation*}
  \begin{split}
\E[L(Y, h(X^*, \widehat \beta))| S = 0] &= \E\! \big[\E[L(Y, h(X^*, \widehat \beta))| X, S = 0 ] \big| S = 0\big] \\
&= \E\! \Bigg[ \dfrac{\E \big[ L(Y, h(X^*, \widehat \beta)) e^{\eta q(Y)} | X, S = 1 \big]}{\E \big[ e^{\eta q(Y)} | X, S = 1 \big]} \Bigg| S = 0  \Bigg].
  \end{split}
\end{equation*}

For nested designs we have
\begin{equation*}
  \begin{split}
&\E[L(Y, h(X^*, \widehat \beta))] \\ &= \E \big[\E[L(Y, h(X^*, \widehat \beta))| X]\big] \\
&= \E\! \big[\Pr[S=1|X] \E[L(Y, h(X^*, \widehat \beta)| X, S=1]  +  \Pr[S=0|X] \E[L(Y, h(X^*, \widehat \beta))| X, S=0] \big] \\
&= \E\! \big[S \E[L(Y, h(X^*, \widehat \beta))| X, S=1]  + I(S=0) \E[L(Y, h(X^*, \widehat \beta))| X, S=0]  \big] \\
&= \E\! \Bigg[S \E[L(Y, h(X^*, \widehat \beta))| X, S=1]  +  I(S=0) \dfrac{\E \big[ L(Y, h(X^*, \widehat \beta))e^{\eta q(Y)} | X, S = 1 \big]}{\E \big[ e^{\eta q(Y)} | X, S = 1 \big]}  \Bigg].
  \end{split}
\end{equation*}
Noting that 
\begin{equation} \label{eq:re_expression_of_term}
  \begin{split}
\E \big[ S \E [L(Y, h(X^*, \widehat \beta))| X, S =1] \big] &= \E \big[ S \E [L(Y, h(X^*, \widehat \beta))| X, S ] \big] \\
          &= \E \big[  \E [ S L(Y, h(X^*, \widehat \beta))| X, S ] \big] \\
          &= \E \big[ S L(Y, h(X^*, \widehat \beta))\big],
  \end{split}
\end{equation}
we conclude that
\begin{equation*}
  \begin{split}
\E[L(Y, h(X^*, \widehat \beta))] = \E \big[ S L(Y, h(X^*, \widehat \beta))\big] + \E\! \Bigg[ I(S=0) \dfrac{\E \big[ L(Y, h(X^*, \widehat \beta))e^{\eta q(Y)} | X, S = 1 \big]}{\E \big[ e^{\eta q(Y)} | X, S = 1 \big]}  \Bigg].
  \end{split}
\end{equation*}

\clearpage
\section{Influence functions}\label{appendix:influence_functions}

\subsection*{Non-nested designs}

We use the pathwise derivative of $\phi$ to calculate the efficient influence function under the non-parametric model for the observable data \cite{van2000asymptotic}. Let $\frac{d}{dt} \phi_t$ be the derivative of $\phi_t$ under the parametric submodel $p_t, t \in [0,1[$ with $t=0$ denoting the true data law. Let $g(\cdot)$ denote a general score function. We have
\begin{align*}
\frac{d}{dt} \phi_t(\eta)\Big|_{t=0} &= \frac{d}{dt} \E_{p_t}\left[\frac{\E_{p_t}[L(Y, h(X^*, \widehat \beta)) e^{\eta q(Y)}|X, S=1]}{\E_{p_t}[e^{\eta q(Y)}|X, S=1]} \Bigg | S=0 \right]\Bigg|_{t=0} \\
&=  \underbrace{\E_{p_0}\left[\frac{d}{dt} \frac{\E_{p_t}[L(Y, h(X^*, \widehat \beta)) e^{\eta q(Y)}|X, S=1]}{\E_{p_t}[e^{\eta q(Y)}|X, S=1]} \Bigg|_{t=0}\Bigg | S=0 \right]}_{(i)}
\\ &+ \underbrace{\frac{d}{dt} \E_{p_t}\left[ \frac{\E_{p_0}[L(Y, h(X^*, \widehat \beta)) e^{\eta q(Y)}|X, S=1]}{\E_{p_0}[e^{\eta q(Y)}|X, S=1]} \Bigg | S=0 \right]\Bigg|_{t=0}}_{(ii)}
\end{align*}
Start by looking at term $(ii)$, 
\begin{align*}
(ii) &= \frac{d}{dt} \E_{p_t}\left[ \frac{\E_{p_0}[L(Y, h(X^*, \widehat \beta)) e^{\eta q(Y)}|X, S=1]}{\E_{p_0}[e^{\eta q(Y)}|X, S=1]} \Bigg | S=0 \right]\Bigg|_{t=0} \\ &= \E_{p_0}\left[ \frac{\E_{p_0}[L(Y, h(X^*, \widehat \beta)) e^{\eta q(Y)}|X, S=1]}{\E_{p_0}[e^{\eta q(Y)}|X, S=1]} g_{Y,X|S=0}\Bigg | S=0 \right].
\end{align*}
Define 
\[
b(X) = \frac{\E_{p_0}[L(Y, h(X^*, \widehat \beta)) e^{\eta q(Y)}|X, S=1]}{\E_{p_0}[e^{\eta q(Y)}|X, S=1]}.
\]
Using this notation 
\begin{align*}
(ii) &=  \E_{p_0}\left[ b(X) g_{Y,X|S=0}\Bigg | S=0 \right]\\
&=\E_{p_0}\left[ \left(b(X) - \E_{p_0}[b(X)|S=0]\right) g_{Y,X|S=0}\Bigg | S=0 \right]\\
&= \E_{p_0}\left[\frac{I(S=0)}{\Pr[S=0]} \left(b(X) - \E_{p_0}[b(X)|S=0]\right) g_{Y,X,S} \right].
\end{align*}
By equation \ref{rel-S} 
\begin{align*}
\E_{p_0}[b(X)|S=0] &= \E_{p_0} \left[\frac{\E_{p_0}[L(Y, h(X^*, \widehat \beta)) e^{\eta q(Y)}|X, S=1]}{\E_{p_0}[e^{\eta q(Y)}|X, S=1]} \Big|S=0\right]\\
&= \E_{p_0}[\E_{p_0}[L(Y, h(X^*, \widehat \beta)|X, S=0]|S=0] \\
&= \E_{p_0}[L(Y, h(X^*, \widehat \beta)|S=0]\\
&= \phi.
\end{align*}
Thus,
\[
(ii) = \E_{p_0}\left[\frac{I(S=0)}{\Pr[S=0]} \left(b(X) - \phi \right) g_{Y,X,S} \right].
\]
Now look at the first term $(i)$. We have
\begin{align*}
&\frac{d}{dt} \E_{p_t} [L(Y, h(X^*, \widehat \beta))e^{\eta q(Y)}|X,S=1] \\ &=\E_{p_0}[L(Y, h(X^*, \widehat \beta))e^{\eta q(Y)} g_{Y|X,S=1}|X,S=1] \\
&=\E_{p_0}[(L(Y, h(X^*, \widehat \beta))e^{\eta q(Y)} - \E_{p_0}[L(Y, h(X^*, \widehat \beta))e^{\eta q(Y)}|X, S=1]) g_{Y|X,S=1}|X,S=1] \end{align*}
and
\[
\frac{d}{dt} \E_{p_t} [e^{\eta q(Y)}|X,S=1] =\E_{p_0}[(e^{\eta q(Y)} - \E_{p_0}[e^{\eta q(Y)}|X, S=1]) g_{Y|X,S=1}|X,S=1].
\]
Using this
\begin{align*}
& \E_{p_0}\left[ \frac{d}{dt} \frac{\E_{p_t}[L(Y, h(X^*, \widehat \beta)) e^{\eta q(Y)}|X, S=1]}{\E_{p_t}[e^{\eta q(Y)}|X, S=1]}\Bigg|_{t=0} \Bigg | S=0 \right] \\
&= \underbrace{\E_{p_0}\left[ \frac{\E_{p_0} [(L(Y, h(X^*, \widehat \beta))e^{\eta q(Y)} - \E_{p_0}[L(Y, h(X^*, \widehat \beta)) e^{\eta q(Y)}|X, S=1]) g_{Y|X,S=1}|X,S=1] }{\E_{p_0}[e^{\eta q(Y)}|X, S=1]} \Bigg | S=0 \right]}_{(iii)} \\
&- \underbrace{E_{p_0}\left[ \frac{\E_{p_0}[L(Y, h(X^*, \widehat \beta)) e^{\eta q(Y)}|X, S=1]}{\E_{p_0}[e^{\eta q(Y)}|X, S=1]^2 } \E_{p_0}[(e^{\eta q(Y)} - \E_{p_0}[e^{\eta q(Y)}|X, S=1]) g_{Y|X,S=1}|X,S=1] \Bigg | S=0 \right]}_{(iv)}
\end{align*}
Start with term $(iii)$
\begin{align*}
&\hspace{-25pt} (iii) = \E_{p_0} \left[ \frac{\E_{p_0} [(L(Y, h(X^*, \widehat \beta)) e^{\eta q(Y)} - \E_{p_0}[L(Y, h(X^*, \widehat \beta)) e^{\eta q(Y)}|X, S=1]) g_{Y|X,S=1}|X,S=1] }{\E_{p_0}[e^{\eta q(Y)}|X, S=1]} \Bigg | S=0 \right] \\ 
&\hspace{-25pt}= \E_{p_0} \left[ \frac{I(S=0)}{\Pr[S=0]} \frac{\E_{p_0} [(L(Y, h(X^*, \widehat \beta)) e^{\eta q(Y)} - \E_{p_0}[L(Y, h(X^*, \widehat \beta)) e^{\eta q(Y)}|X, S=1]) g_{Y|X,S=1}|X,S=1] }{\E_{p_0}[e^{\eta q(Y)}|X, S=1]}\right] \\ 
&\hspace{-45pt}= \E_{p_0} \left[ \E_{p_0} \left[ \frac{I(S=0)}{\Pr[S=0]} \frac{\E_{p_0} \left[\frac{I(S=1)}{\Pr[S=1|X]} (L(Y, h(X^*, \widehat \beta)) e^{\eta q(Y)} - \E_{p_0}[L(Y, h(X^*, \widehat \beta))e^{\eta q(Y)}|X, S=1]) g_{Y|X,S=1}|X \right] }{\E_{p_0}[e^{\eta q(Y)}|X, S=1]}\Bigg | X \right] \right] \\ 
&\hspace{-25pt}= \E_{p_0} \left[  \frac{\Pr[S=0|X]}{\Pr[S=0]} \frac{\E_{p_0} \left[\frac{I(S=1)}{\Pr[S=1|X]} (L(Y, h(X^*, \widehat \beta))e^{\eta q(Y)} - \E_{p_0}[L(Y, h(X^*, \widehat \beta))e^{\eta q(Y)}|X, S=1]) g_{Y|X,S=1}|X \right] }{\E_{p_0}[e^{\eta q(Y)}|X, S=1]} \right] \\
&\hspace{-25pt}= \frac{\E_{p_0} \left[ \frac{I(S=1) \Pr[S=0|X]}{\Pr[S=1|X] \E_{p_0}[e^{\eta q(Y)}|X, S=1] }   (L(Y, h(X^*, \widehat \beta))e^{\eta q(Y)} - \E_{p_0}[L(Y, h(X^*, \widehat \beta))e^{\eta q(Y)}|X, S=1]) g_{Y|X,S=1}  \right]}{\Pr[S=0]} \\
&\hspace{-25pt}= \frac{\E_{p_0} \left[ \frac{I(S=1) \Pr[S=0|X]}{\Pr[S=1|X] \E_{p_0}[e^{\eta q(Y)}|X, S=1] }   (L(Y, h(X^*, \widehat \beta))e^{\eta q(Y)} - \E_{p_0}[L(Y, h(X^*, \widehat \beta))e^{\eta q(Y)}|X, S=1]) g_{Y,X,S}  \right]}{\Pr[S=0]}.
\end{align*}
Similarly for term $(iv)$ we have 
\begin{align*}
&(iv) = \frac{\E_{p_0} \left[\frac{I(S=1) \Pr[S=0|X] \E_{p_0}[L(Y, h(X^*, \widehat \beta))e^{\eta q(Y)}|X, S=1]}{\Pr[S=1|X] \E_{p_0}[e^{\eta q(Y)}|X, S=1]^2} (e^{\eta q(Y)} - \E_{p_0}[e^{\eta q(Y)}|X, S=1]) g_{Y,X,S} \right]}{\Pr[S=0]}.
\end{align*}
Hence, term (i) is equal to
\begin{align*}
&\hspace{-30pt}\frac{1}{\Pr[S=0]} \E_{p_0} \Bigg[ \frac{I(S=1) \Pr[S=0|X] }{\Pr[S=1|X] \E_{p_0}[e^{\eta q(Y)}|X, S=1]} \Bigg(L(Y, h(X^*, \widehat \beta))e^{\eta q(Y)} - \E_{p_0}[L(Y, h(X^*, \widehat \beta))e^{\eta q(Y)}|X, S=1] \\
&- \frac{\E_{p_0}[L(Y, h(X^*, \widehat \beta)) e^{\eta q(Y)}|X, S=1]}{\E_{p_0}[e^{\eta q(Y)}|X, S=1]} (e^{\eta q(Y)} - \E_{p_0}[ e^{\eta q(Y)}|X, S=1])  \Bigg) g_{Y,X,S}\Bigg]
\end{align*}

Combing the above gives that the influence function of the functional in (\ref{eq_mean_identification_S0}) is
\begin{equation*}
	\begin{split}
\mathit{\Phi}^1(\eta) &= \dfrac{1}{\Pr[S = 0]}  \vast\{ I(S=0) \Bigg\{ \dfrac{\E[L(Y, h(X^*, \widehat \beta))e^{\eta q(Y)} | X, S = 1]}{\E[e^{\eta q(Y)} | X, S = 1]}  - \phi\Bigg\} \\
&\quad\quad\quad + \dfrac{I(S = 1) \Pr[S = 0 | X]}{\Pr[S=1 | X ] \E[e^{\eta q(Y)} | X, S = 1]} \\
&\quad \times \Bigg\{ L(Y, h(X^*, \widehat \beta))e^{\eta q(Y)}  - \E[L(Y, h(X^*, \widehat \beta))e^{\eta q(Y)} | X, S = 1]  - \dfrac{\E[L(Y, h(X^*, \widehat \beta))e^{\eta q(Y)} | X, S = 1]}{\E[e^{\eta q(Y)} | X, S = 1]} \\
&\quad\quad\quad\quad\quad\quad\quad\quad\quad\quad\quad\quad\quad\quad\quad\quad \times \Big\{  e^{\eta q(Y)} - \E[e^{\eta q(Y)} | X, S = 1]  \Big\}   \Bigg\}   \vast\} \\ 
&= \dfrac{1}{\Pr[S = 0]}  \vast\{ I(S=0) \Bigg\{ \dfrac{\E[L(Y, h(X^*, \widehat \beta))e^{\eta q(Y)} | X, S = 1]}{\E[e^{\eta q(Y)} | X, S = 1]}  - \phi \Bigg\} \\
&\quad\quad\quad + \dfrac{I(S = 1) \Pr[S = 0 | X] e^{\eta q(Y)} }{\Pr[S=1| X ] \E[e^{\eta q(Y)} | X, S = 1]} \times \Bigg\{ L(Y, h(X^*, \widehat \beta))-  \dfrac{\E[L(Y, h(X^*, \widehat \beta))e^{\eta q(Y)} | X, S = 1]}{\E[e^{\eta q(Y)} | X, S = 1]}   \Bigg\}   \vast\} .
	\end{split}
\end{equation*}

\subsection*{Nested design}

Write the observed data sensitivity analysis functional in (\ref{eq_mean_identification}) as 
\begin{equation*}
  \begin{split}
  \psi(\eta) &=  \E \big[ S L(Y, h(X^*, \widehat \beta))\big]  +  \E \Bigg[  I(S=0) \dfrac{ \E[L(Y, h(X^*, \widehat \beta))e^{\eta q(Y)} | X, S = 1]  }{ \E [e^{\eta q(Y)} | X, S = 1]  }   \Bigg] \\
    &=  \underbrace{ \E \big[ S L(Y, h(X^*, \widehat \beta))\big]   \big]}_{\psi_1} + \underbrace{\E  \Bigg[  I(S=0) \dfrac{ \E[L(Y, h(X^*, \widehat \beta))e^{\eta q(Y)} | X, S = 1]  }{ \E[e^{\eta q(Y)} | X, S = 1]  }  \Bigg]}_{\psi_2(\eta)}.
  \end{split}
\end{equation*}

The influence function for the first term, $\psi_1$, is
\begin{equation*}
  \begin{split}
\mathit{\Psi}_1^1(\eta)  = S  L(Y, h(X^*, \widehat \beta))-  \E[ S L(Y, h(X^*, \widehat \beta))] = S  L(Y, h(X^*, \widehat \beta))- \psi_1(\eta). 
  \end{split}
\end{equation*}

Using similar calculations as for the non-nested design, the influence function for the second term, $\psi_2(\eta)$, is
\begin{equation*}
	\begin{split}
\mathit{\Psi}_2^1(\eta) &= I(S=0) \dfrac{\E[L(Y, h(X^*, \widehat \beta))e^{\eta q(Y)} | X, S = 1]}{\E[e^{\eta q(Y)} | X, S = 1]} + \dfrac{I(S = 1) \Pr[S = 0 | X]}{\Pr[S=1 | X ] \E[e^{\eta q(Y)} | X, S = 1]} \\
&\quad \times \Bigg\{ L(Y, h(X^*, \widehat \beta))e^{\eta q(Y)}  - \E[L(Y, h(X^*, \widehat \beta))e^{\eta q(Y)} | X, S = 1] \\
&\quad\quad\quad\quad\quad\quad\quad - \dfrac{\E[L(Y, h(X^*, \widehat \beta))e^{\eta q(Y)} | X, S = 1]}{\E[e^{\eta q(Y)} | X, S = 1]} \times \Big\{  e^{\eta q(Y)} - \E[e^{\eta q(Y)} | X, S = 1]  \Big\}   \Bigg\} \\
&\quad - \psi_{2}(\eta).
  \end{split}
\end{equation*}

With a bit of algebra, we obtain 
\begin{equation*}
  \begin{split}
\mathit{\Psi}_2^1(\eta) &= I(S=0) \dfrac{\E[L(Y, h(X^*, \widehat \beta))e^{\eta q(Y)} | X, S = 1]}{\E[e^{\eta q(Y)} | X, S = 1]} \\
&\quad + \dfrac{I(S = 1) \Pr[S = 0 | X] e^{\eta q(Y)}}{\Pr[S=1 | X ] \E[e^{\eta q(Y)} | X, S = 1]} \times \Bigg\{ L(Y, h(X^*, \widehat \beta)) - \dfrac{\E[L(Y, h(X^*, \widehat \beta))e^{\eta q(Y)} | X, S = 1]}{\E[e^{\eta q(Y)} | X, S = 1]}  \Bigg\} - \psi_{2}(\eta).
	\end{split}
\end{equation*}

The influence function for $\psi(\eta)$ is 
$$\mathit{\Psi}^1 (\eta) = \mathit{\Psi}_1^1 + \mathit{\Psi}_2^1(\eta).$$

\section{Asymptotic properties of the augmented estimator for non-nested designs}
\label{sec:DR}

\subsection{Asymptotic representation of augmented estimator for non-nested designs}
\label{As-Rep-Aug}
Recall that the augmented estimator is given by
\begin{align*}
\widehat \phi_{aug}(\eta) = \dfrac{1}{n_0} \sum_{i=1}^n \left( I(S_i=0) \widehat b(X_i; \eta) + \dfrac{I(S_i = 1) (1 - \widehat p(X_i)) e^{\eta q(Y_i)} }{\widehat p(X_i) \widehat c(X_i;\eta) } \times \left( L(Y_i, h(X_i^*, \widehat \beta))-  \widehat b(X_i; \eta)\right) \right),
\end{align*}

We make the following assumptions 
\begin{enumerate}
\item[B1] At least one of 
\[
\widehat b(X; \eta) \overset{P}{\longrightarrow}   \dfrac{\E \big[ L(Y, h(X^*, \widehat \beta))e^{\eta q(Y)} | X, S = 1 \big]}{\E \big[ e^{\eta q(Y)} | X, S = 1 \big]}
\]
or
\[
(\widehat p(X), \widehat c(X;\eta)) \overset{P}{\longrightarrow}  (\Pr[S=1|X], \E[e^{\eta q(Y)}|X, S=1]).
\]
\item[B2] The sequences $H(X,S,Y; b^*(X;\eta), c^*(X;\eta), p^*(X),\Pr[S=0]^{-1})$ and $H(X,S,Y; \widehat b(X; \eta), \widehat c(X;\eta), \widehat p(X),n/n_0)$ are Donsker.
\item[B3] \[
||H(X,S,Y; \widehat b(X; \eta), \widehat c(X;\eta), \widehat p(X), n/n_0) - H(X,S,Y; b^*(X;\eta), c^*(X;\eta), p^*(X), \Pr[S=0]^{-1})||_2 \overset{P}{\longrightarrow} 0.
\]
\item[B4] $\E[H(X,S,Y; b^*(X;\eta), c^*(X;\eta), p^*(X), \Pr[S=0]^{-1})^2] < \infty$.
\item[B5] $\widehat p(X)$ and $\widehat c(X;\eta)$ are uniformly bounded away from zero and $\E[L(Y, h(X^*, \widehat \beta))e^{\eta q(Y)}|X,S=1]$, $\E[e^{\eta q(Y)}|X,S=1]$, $\widehat b(X; \eta)$ are uniformly bounded.
\end{enumerate}
Now we will prove Theorem \ref{thm-as-rep} in the main text.
\begin{proof} 
\textbf{Consistency:} 
We have that 
\[
\widehat \phi_{aug}(\eta) \overset{P}{\longrightarrow} \frac{1}{\Pr[S=0]} \E\left[ I(S=0) b^*(X;\eta) +  \frac{I(S=1) (1 - p^*(X)) e^{\eta q(Y)} }{p^*(X) c^*(X;\eta)} \left(L(Y, h(X^*, \widehat \beta)) - b^*(X;\eta) \right)\right].
\]
Now we show that the right hand side of the above equation is equal to $\phi(\eta)$ if either $\widehat b(X; \eta)$ is consistent or both  $\widehat p(X)$ and $\widehat c(X;\eta)$ are consistent.
\\
\\
\textbf{Case 1:} First assume that 
\[
\widehat b(X; \eta) \overset{P}{\longrightarrow}   \dfrac{\E \big[ L(Y, h(X^*, \widehat \beta))e^{\eta q(Y)} | X, S = 1 \big]}{\E \big[ e^{\eta q(Y)} | X, S = 1 \big]},
\]
but we make no assumptions on the limits $p^*(X)$ and $c^*(X;\eta)$. Using that assumption
\begin{align*}
\frac{1}{\Pr[S=0]} &\E\left[\frac{I(S=1) (1 - p^*(X)) e^{\eta q(Y)} }{p^*(X) c^*(X;\eta) } \left(L(Y, h(X^*, \widehat \beta)) - b^*(X;\eta)\right)\right] \\
&= \frac{1}{\Pr[S=0]} \E\left[\E\left[ \frac{I(S=1) (1 - p^*(X)) e^{\eta q(Y)} }{p^*(X) c^*(X;\eta) } \left(L(Y, h(X^*, \widehat \beta)) - b^*(X;\eta)\right)\Big|X \right]\right] \\
&= \frac{1}{\Pr[S=0]} \E\left[\E\left[ \frac{I(S=1) (1 - p^*(X)) e^{\eta q(Y)} }{p^*(X) c^*(X;\eta) } L(Y, h(X^*, \widehat \beta)) \Big|X \right]\right] \\
&-  \frac{1}{\Pr[S=0]} \E\left[\E\left[ \frac{I(S=1) (1 - p^*(X)) e^{\eta q(Y)} }{p^*(X) c^*(X;\eta) } \dfrac{\E \big[ L(Y, h(X^*, \widehat \beta))e^{\eta q(Y)} | X, S = 1 \big]}{\E \big[ e^{\eta q(Y)} | X, S = 1 \big]} \Big|X \right]\right] \\
&= \frac{1}{\Pr[S=0]} \E\left[  \frac{\Pr[S=1|X] (1 - p^*(X))}{p^*(X) c^*(X;\eta) } \E\left[e^{\eta q(Y)} L(Y, h(X^*, \widehat \beta)) |X,S=1 \right]\right] \\
&- \frac{1}{\Pr[S=0]} \E\left[ \frac{\Pr[S=1|X] (1 - p^*(X)) }{p^*(X) c^*(X;\eta)}  \dfrac{\E \big[ L(Y, h(X^*, \widehat \beta))e^{\eta q(Y)} | X, S = 1 \big]}{\E \big[ e^{\eta q(Y)} | X, S = 1 \big]} \E[e^{\eta q(Y)}  \Big|X,S=1] \right] \\
&=0.
\end{align*}
Combing the above with 
\begin{equation*}
\frac{1}{\Pr[S=0]} \E\left[ I(S=0) b^*(X;\eta)\right] = \E\left[ \dfrac{\E \big[ L(Y, h(X^*, \widehat \beta))e^{\eta q(Y)} | X, S = 1 \big]}{\E \big[ e^{\eta q(Y)} | X, S = 1 \big]}\Big|S=0\right] = \phi(\eta)
\end{equation*}
gives 
\[
\widehat \phi_{aug}(\eta) \overset{P}{\longrightarrow} \phi(\eta).
\]
\textbf{Case 2:} Now assume that $p^*(X) = \Pr[S=1|X]$ and $c^*(X;\eta) = \E [ e^{\eta q(Y)} | X, S = 1]$, but we make no assumptions on the limit $b^*(X;\eta)$. We have
\begin{align*}
&\frac{1}{\Pr[S=0]} \E\left[ \frac{I(S=1) (1 - p^*(X)) e^{\eta q(Y)} }{p^*(X) c^*(X;\eta) } L(Y, h(X^*, \widehat \beta))\right] \\
&= \frac{1}{\Pr[S=0]} \E\left[ \frac{I(S=1) \Pr[S=0|X]}{\Pr[S=1|X] \E[e^{\eta q(Y)}|X, S=1]} e^{\eta q(Y)} L(Y, h(X^*, \widehat \beta))\right] \\
&= \frac{1}{\Pr[S=0]} \E\left[\E\left[ \frac{I(S=1) \Pr[S=0|X]}{\Pr[S=1|X] \E[e^{\eta q(Y)}|X, S=1]} e^{\eta q(Y)} L(Y, h(X^*, \widehat \beta))\Bigg|X \right]\right]\\
&= \frac{1}{\Pr[S=0]} \E\left[\E\left[ \frac{ \Pr[S=0|X]}{\E[e^{\eta q(Y)}|X, S=1]} e^{\eta q(Y)} L(Y, h(X^*, \widehat \beta))\Bigg|X,S=1 \right]\right]\\
&= \frac{1}{\Pr[S=0]} \E\left[ \frac{ \Pr[S=0|X]}{\E[e^{\eta q(Y)}|X, S=1]} \E[e^{\eta q(Y)} L(Y, h(X^*, \widehat \beta))|X,S=1 ]\right] \\
&= \frac{1}{\Pr[S=0]} \E\left[  I(S=0) \frac{\E[e^{\eta q(Y)} L(Y, h(X^*, \widehat \beta))|X,S=1 ]}{\E[e^{\eta q(Y)}|X, S=1]}\right]\\
&= \E\left[  \frac{\E[e^{\eta q(Y)} L(Y, h(X^*, \widehat \beta))|X,S=1 ]}{\E[e^{\eta q(Y)}|X, S=1]}\Bigg|S=0\right] \\
&= \phi(\eta).
\end{align*}
Similar arguments give
\begin{align*}
&\frac{1}{\Pr[S=0]} \E\left[\frac{I(S=1) (1 - p^*(X)) e^{\eta q(Y)} }{p^*(X) c^*(X;\eta) } b^*(X;\eta)\right] \\
&= \frac{1}{\Pr[S=0]} \E\left[ \frac{I(S=1) \Pr[S=0|X]}{\Pr[S=1|X] \E[e^{\eta q(Y)}|X, S=1]} e^{\eta q(Y)} b^*(X;\eta)\right]\\
&= \frac{1}{\Pr[S=0]} \E\left[\E\left[ \frac{I(S=1) \Pr[S=0|X]}{\Pr[S=1|X] \E[e^{\eta q(Y)}|X, S=1]} e^{\eta q(Y)} b^*(X;\eta)\Bigg|X\right]\right]\\
&= \frac{1}{\Pr[S=0]} \E\left[ \frac{Pr[S=0|X]}{\E[e^{\eta q(Y)}|X, S=1]} \E[e^{\eta q(Y)}|X, S=1] b^*(X;\eta)\right]\\
&= \frac{1}{\Pr[S=0]} \E\left[ Pr[S=0|X] b^*(X;\eta)\right] \\
&= \frac{1}{\Pr[S=0]} \E\left[ I(S=0) b^*(X;\eta)\right].
\end{align*}
Combing this gives
\[
\widehat \phi_{aug}(\eta) \overset{P}{\longrightarrow} \phi(\eta).
\]
Combing case 1 and case 2 gives that if either $\widehat b(X; \eta)$ is consistent or both $\widehat p(X)$ and $\widehat c(X;\eta)$ are consistent, then the augmented estimator in non-nested designs is consistent.
\\
\\
\noindent \textbf{Asymptotic representation:}
Rewrite
\begin{align*}
\sqrt{n} (\widehat \phi_{aug}(\eta) - \phi(\eta)) &= \Big(\mathbb{G}_n(H(X,S,Y; \widehat b(X; \eta),\widehat c(X;\eta), \widehat p(X),  n/n_0)) \\
&- \mathbb{G}_n(H(X,S,Y; b^*(X;\eta), c^*(X;\eta), p^*(X), \Pr[S=0]^{-1}) \Big) \\
&+ \mathbb{G}_n(H(X,S,Y; b^*(X;\eta), c^*(X;\eta), p^*(X), \Pr[S=0])^{-1}) \\
& + \sqrt{n} \left[\E[H(X,S,Y; \widehat b(X; \eta), \widehat c(X;\eta), \widehat p(X), n/n_0)] - \phi(\eta)\right].
\end{align*}
By the Donsker assumption
\[
\mathbb{G}_n(H(X,S,Y; \widehat b(X; \eta), \widehat c(X;\eta), \widehat p(X),  n/n_0))
- \mathbb{G}_n(H(X,S,Y; b^*(X;\eta), c^*(X;\eta), p^*(X), \Pr[S=0]^{-1}) = o_P(1).
\]
and by the central limit theorem 
\[
\mathbb{G}_n(H(X,S,Y; b^*(X;\eta), c^*(X;\eta), p^*(X), \Pr[S=0]^{-1}))
\]
is asymptotically normal. So, the behavior of the estimator depends on the 
\[
\sqrt{n} \left(\E[H(X,S,Y; \widehat b(X; \eta), \widehat c(X;\eta), \widehat p(X), n/n_0)] - \phi(\eta)\right).
\]
We have 
\begin{align*}
\sqrt{n} (\E[H(X,S,Y; &\widehat b(X; \eta), \widehat c(X;\eta), \widehat p(X), n/n_0)] - \phi(\eta)) \\ &=     \sqrt{n} (\E[H(X,S,Y; \widehat b(X; \eta), \widehat c(X;\eta), \widehat p(X), \Pr[S=0]^{-1}) - \phi(\eta)) + O_P(1)
\end{align*}
Rewrite $\phi(\eta)$ as
\begin{align}
\label{psi-re}
\phi(\eta) &= \frac{1}{\Pr[S=0]}  \E\left[I(S=0)\dfrac{\E \big[ L(Y, h(X^*, \widehat \beta))e^{\eta q(Y)} | X, S = 1 \big]}{\E \big[ e^{\eta q(Y)} | X, S = 1 \big]}\right] \nonumber 
\\ &= \frac{1}{\Pr[S=0]}\E\left[\Pr[S=0|X] \dfrac{\E \big[ L(Y, h(X^*, \widehat \beta))e^{\eta q(Y)} | X, S = 1 \big]}{\E \big[ e^{\eta q(Y)} | X, S = 1 \big]}\right] 
\end{align}
and 
\begin{equation}
\label{T1}
\frac{1}{\Pr[S=0]} \E[I(S=0) \widehat b(X; \eta)] = \frac{1}{\Pr[S=0]} \E[\Pr[S=0|X] \widehat b(X; \eta)]
\end{equation}
and 
\begin{align}
\label{T2}
&\frac{1}{\Pr[S=0]}\E\left[\dfrac{I(S = 1) (1 - \widehat p(X)) e^{\eta q(Y)} }{\widehat p(X) \widehat c(X;\eta) } \times \left( L(Y, h(X^*, \widehat \beta))-  \widehat b(X; \eta)\right)\right] \nonumber \\
&=\frac{1}{\Pr[S=0]}\E\left[\E\left[\dfrac{I(S = 1) (1 - \widehat p(X)) e^{\eta q(Y)} }{\widehat p(X) \widehat c(X;\eta) } \times \left( L(Y, h(X^*, \widehat \beta))-  \widehat b(X; \eta)\right)\Bigg|X\right] \right] \nonumber \\
&=\frac{1}{\Pr[S=0]}\E\left[\E\left[\dfrac{\Pr[S=1|X] (1 - \widehat p(X)) e^{\eta q(Y)} }{\widehat p(X) \widehat c(X;\eta) } \times \left( L(Y, h(X^*, \widehat \beta))-  \widehat b(X; \eta)\right)\Bigg|X, S=1\right] \right] \nonumber \\
&=\frac{1}{\Pr[S=0]}\E\left[\dfrac{\Pr[S=1|X] (1 - \widehat p(X)) \E[e^{\eta q(Y)}|X,S=1]}{\widehat p(X) \widehat c(X;\eta) } \times \left( \frac{\E[L(Y, h(X^*, \widehat \beta))e^{\eta q(Y)}|X,S=1]}{\E[e^{\eta q(Y)}|X,S=1]}-  \widehat b(X; \eta)\right)\right] 
\end{align}
Combining expressions \eqref{psi-re}, \eqref{T1}, \eqref{T2} gives
\begin{align*}
&\sqrt{n} (\E[H(X,S,Y; \widehat b(X; \eta), \widehat c(X;\eta), \widehat p(X), \Pr[S=0]^{-1}) - \phi(\eta)) \\ &= \frac{1}{\Pr[S=0]}\E\Bigg[ \left( \frac{\E[L(Y, h(X^*, \widehat \beta))e^{\eta q(Y)}|X,S=1]}{\E[e^{\eta q(Y)}|X,S=1]}-  \widehat b(X; \eta)\right) \\
& \quad \quad \times \left( \dfrac{\Pr[S=1|X] (1 - \widehat p(X)) \E[e^{\eta q(Y)}|X,S=1]}{\widehat p(X) \widehat c(X;\eta) } - \Pr[S=0|X]\right)\Bigg]
\end{align*}
Using the assumptions made there exists a constant $M$ such that 
\begin{align*}
& \dfrac{\Pr[S=1|X] (1 - \widehat p(X)) \E[e^{\eta q(Y)}|X,S=1]}{\widehat p(X) \widehat c(X;\eta) } - \Pr[S=0|X] \\
&\leq M \left( \Pr[S=1|X] (1-\widehat p(X)) \E[e^{\eta q(Y)}|X,S=1]  - \Pr[S=0|X] \widehat p(X) \widehat c(X;\eta)\right).
\end{align*}
Little algebra and the Cauchy–Schwarz inequality give
\begin{align*}
&\sqrt{n} (\widehat \phi_{aug}(\eta) - \phi(\eta)) \\ &\leq O_P\Bigg(1 +  \Bigg|\Bigg|\frac{\E[L(Y, h(X^*, \widehat \beta))e^{\eta q(Y)}|X,S=1]}{\E[e^{\eta q(Y)}|X,S=1]}-  \widehat b(X; \eta)\Bigg|\Bigg|_2 \times ||\Pr[S=1|X] - \widehat p(X)||_2 \\
&+ \Bigg|\Bigg|\frac{\E[L(Y, h(X^*, \widehat \beta))e^{\eta q(Y)}|X,S=1]}{\E[e^{\eta q(Y)}|X,S=1]}-  \widehat b(X; \eta)\Bigg|\Bigg|_2 \times ||\E[e^{\eta q(Y)}|X,S=1]-  \widehat c(X;\eta)||_2\Bigg).
\end{align*}
\end{proof}

\subsection{Proof of Theorem \ref{non-nest-dr-2}:}

\begin{proof}
Let $g^*(X)$ and $p^*(X)$ be the asymptotic limits of $\widehat g(X)$ and $\widehat p(X)$, respectively, and define
\[
b^*(X;\eta) = \left(  \dfrac{  L(1, h(X^*, \widehat \beta)) e^{\eta} g^*(X) + L(0, h(X^*, \widehat \beta))  ( 1 - g^*(X))  }{  e^{\eta} g^*(X) + ( 1 - g^*(X))  } \right).
\]
We have that
\[ 
\widehat \phi_{aug}(\eta) \overset{P}{\longrightarrow} \frac{\E\left[ I(S=0) b^*(X;\eta) -  \frac{I(S=1) (1 - p^*(X)) e^{\eta Y} }{p^*(X) (e^{\eta } g^*(X) + ( 1 - g^*(X)) )} \left(L(Y, h(X^*, \widehat \beta)) - b^*(X;\eta) \right)\right]}{\Pr[S=0]}.
\]

Assume that $g^*(X) = \Pr[Y=1|X, S=1]$, but we do not assume that $p^*(X) = \Pr[S=1|X]$. As $g^*(X) =  \Pr[Y=1|X, S=1]$, we have that $b^*(X;\eta) = \dfrac{\E[L(Y, h(X^*, \widehat \beta))e^{\eta Y} | X, S = 1]}{\E[e^{\eta Y} | X, S = 1]}$ and it follows that
\[
\frac{1}{\Pr[S=0]} \E[I(S=0) b^*(X;\eta)] = \E[b^*(X;\eta) |S=0] = \phi(\eta)
\]
Properties of conditional expectations give
\begin{align*}
& \E\left[\frac{I(S=1) (1 - p^*(X)) e^{\eta Y} }{p^*(X) (e^{\eta } g^*(X) + ( 1 - g^*(X)) )} L(Y, h(X^*, \widehat \beta)) \right] \\
&= \E\left[\E\left[\frac{I(S=1) (1 - p^*(X)) e^{\eta Y} }{p^*(X) (e^{\eta } g^*(X) + ( 1 - g^*(X)) )} L(Y, h(X^*, \widehat \beta)) \Bigg| X \right]\right] \\
&= \E\left[\frac{\Pr[S=1|X] (1 - p^*(X))  }{p^*(X) (e^{\eta } g^*(X) + ( 1 - g^*(X)) )} \E[e^{\eta Y} L(Y, h(X^*, \widehat \beta))| X,S=1] \right]
\end{align*}
and
\begin{align*}
& \E\left[\frac{I(S=1) (1 - p^*(X)) e^{\eta Y} }{p^*(X) (e^{\eta } g^*(X) + ( 1 - g^*(X)) )} b^*(X;\eta) \right] \\
&= \E\left[ \E\left[\frac{I(S=1) (1 - p^*(X)) e^{\eta Y} }{p^*(X) (e^{\eta } g^*(X) + ( 1 - g^*(X)))} \dfrac{\E[L(Y, h(X^*, \widehat \beta))e^{\eta Y} | X, S = 1]}{\E[e^{\eta Y} | X, S = 1]} \Bigg|X\right]\right] \\
&= \E\left[ \E\left[\frac{\Pr[S=1|X] (1 - p^*(X)) e^{\eta Y} }{p^*(X) (e^{\eta } g^*(X) + ( 1 - g^*(X)))} \dfrac{\E[L(Y, h(X^*, \widehat \beta))e^{\eta Y} | X, S = 1]}{\E[e^{\eta Y} | X, S = 1]} \Bigg|X,S=1\right]\right] \\
&= \E\left[ \frac{\Pr[S=1|X] (1 - p^*(X)) \E[e^{\eta Y}|X,S=1] }{p^*(X) (e^{\eta } g^*(X) + ( 1 - g^*(X)))} \dfrac{\E[L(Y, h(X^*, \widehat \beta))e^{\eta Y} | X, S = 1]}{\E[e^{\eta Y} | X, S = 1]}\right] \\
&= \E\left[ \frac{\Pr[S=1|X] (1 - p^*(X)) }{p^*(X) (e^{\eta } g^*(X) + ( 1 - g^*(X)))} \E[L(Y, h(X^*, \widehat \beta))e^{\eta Y} | X, S = 1]\right] \\
\end{align*}
Combing the above gives
\[
\widehat \phi_{aug}(\eta) \overset{P}{\longrightarrow} \phi(\eta)
\]
\end{proof}

\subsection{Proof of Theorem \ref{non-nest-dr-1}:}

We use $\overset{P}{\longrightarrow}$ to denote convergence in probability.

\begin{proof}
 Let $a^*(X)$ and $b^*(X;\eta)$ be the asymptotic limits of $\widehat a(X;\eta)$ and $\widehat b(X; \eta)$, respectively.
We have that 
\[
\widehat \phi_{aug}(\eta) \overset{P}{\longrightarrow} \frac{1}{\Pr[S=0]} \E\left[ I(S=0) b^*(X;\eta) +  I(S=1) e^{a^*(X) + \eta q(Y)} \left(L(Y, h(X^*, \widehat \beta)) - b^*(X;\eta) \right)\right]
\]
Now we show that the right hand side of the above equation is equal to $\phi(\eta)$ if at least one of $\widehat b(X; \eta)$ or $\widehat a(X;\eta)$ is consistent.
\\
\textbf{Case 1:} Assume that 
\[
b^*(X;\eta) = \dfrac{\E \big[ L(Y, h(X^*, \widehat \beta))e^{\eta q(Y)} | X, S = 1 \big]}{\E \big[ e^{\eta q(Y)} | X, S = 1 \big]}
\]
but we do not assume that the estimator $\widehat a(X;\eta)$ is consistent. Then we have
\begin{align*}
\frac{1}{\Pr[S=0]} &\E\left[ I(S=1) e^{a^*(X) + \eta q(Y)} \left(L(Y, h(X^*, \widehat \beta)) - b^*(X;\eta)\right)\right] \\
&= \frac{1}{\Pr[S=0]} \E\left[\E\left[ I(S=1) e^{a^*(X) + \eta q(Y)} \left(L(Y, h(X^*, \widehat \beta)) - b^*(X;\eta)\right)\Big|X \right]\right] \\
&= \frac{1}{\Pr[S=0]} \E\left[\E\left[ I(S=1) e^{a^*(X) + \eta q(Y)} L(Y, h(X^*, \widehat \beta)) \Big|X \right]\right] \\
&-  \frac{1}{\Pr[S=0]} \E\left[\E\left[ I(S=1) e^{a^*(X) + \eta q(Y)} \dfrac{\E \big[ L(Y, h(X^*, \widehat \beta))e^{\eta q(Y)} | X, S = 1 \big]}{\E \big[ e^{\eta q(Y)} | X, S = 1 \big]} \Big|X \right]\right] \\
&= \frac{1}{\Pr[S=0]} \E\left[ \Pr[S=1|X] e^{a^*(X)} \E\left[e^{\eta q(Y)} L(Y, h(X^*, \widehat \beta)) |X,S=1 \right]\right] \\
&- \frac{1}{\Pr[S=0]} \E\left[ \Pr[S=1|X] e^{a^*(X)} \dfrac{\E \big[ L(Y, h(X^*, \widehat \beta))e^{\eta q(Y)} | X, S = 1 \big]}{\E \big[ e^{\eta q(Y)} | X, S = 1 \big]} \E[e^{\eta q(Y)}  \Big|X,S=1] \right] \\
&=0.
\end{align*}
Combing the above with 
\begin{equation*}
\frac{1}{\Pr[S=0]} \E\left[ I(S=0) b^*(X;\eta)\right] = \E\left[ \dfrac{\E \big[ L(Y, h(X^*, \widehat \beta))e^{\eta q(Y)} | X, S = 1 \big]}{\E \big[ e^{\eta q(Y)} | X, S = 1 \big]}\Big|S=0\right] = \phi(\eta)
\end{equation*}
gives 
\[
\widehat \phi_{aug}(\eta) \overset{P}{\longrightarrow} \phi(\eta).
\]
\textbf{Case 2:} Now assume that
\[
a^*(X) = \mbox{logit} \big(\Pr[S = 0 | X]\big) -  \ln \E [ e^{\eta q(Y)} | X, S = 1 ],
\]
but we don't make any assumptions on the limit $b^*(X;\eta)$. First
\begin{align*}
&\frac{1}{\Pr[S=0]} \E\left[ I(S=1) e^{a^*(X) + \eta q(Y)} L(Y, h(X^*, \widehat \beta))\right] \\
&= \frac{1}{\Pr[S=0]} \E\left[ \frac{I(S=1) \Pr[S=0|X]}{\Pr[S=1|X] \E[e^{\eta q(Y)}|X, S=1]} e^{\eta q(Y)} L(Y, h(X^*, \widehat \beta))\right] \\
&= \frac{1}{\Pr[S=0]} \E\left[\E\left[ \frac{I(S=1) \Pr[S=0|X]}{\Pr[S=1|X] \E[e^{\eta q(Y)}|X, S=1]} e^{\eta q(Y)} L(Y, h(X^*, \widehat \beta))\Bigg|X \right]\right]\\
&= \frac{1}{\Pr[S=0]} \E\left[\E\left[ \frac{ \Pr[S=0|X]}{\E[e^{\eta q(Y)}|X, S=1]} e^{\eta q(Y)} L(Y, h(X^*, \widehat \beta))\Bigg|X,S=1 \right]\right]\\
&= \frac{1}{\Pr[S=0]} \E\left[ \frac{ \Pr[S=0|X]}{\E[e^{\eta q(Y)}|X, S=1]} \E[e^{\eta q(Y)} L(Y, h(X^*, \widehat \beta))|X,S=1 ]\right] \\
&= \frac{1}{\Pr[S=0]} \E\left[  I(S=0) \frac{\E[e^{\eta q(Y)} L(Y, h(X^*, \widehat \beta))|X,S=1 ]}{\E[e^{\eta q(Y)}|X, S=1]}\right]\\
&= \E\left[  \frac{\E[e^{\eta q(Y)} L(Y, h(X^*, \widehat \beta))|X,S=1 ]}{\E[e^{\eta q(Y)}|X, S=1]}\Bigg|S=0\right] \\
&= \phi(\eta).
\end{align*}
Similar arguments give
\begin{align*}
&\frac{1}{\Pr[S=0]} \E\left[ I(S=1) e^{a^*(X) + \eta q(Y)} b^*(X;\eta)\right] \\
&= \frac{1}{\Pr[S=0]} \E\left[ \frac{I(S=1) \Pr[S=0|X]}{\Pr[S=1|X] \E[e^{\eta q(Y)}|X, S=1]} e^{\eta q(Y)} b^*(X;\eta)\right]\\
&= \frac{1}{\Pr[S=0]} \E\left[\E\left[ \frac{I(S=1) \Pr[S=0|X]}{\Pr[S=1|X] \E[e^{\eta q(Y)}|X, S=1]} e^{\eta q(Y)} b^*(X;\eta)\Bigg|X\right]\right]\\
&= \frac{1}{\Pr[S=0]} \E\left[ \frac{Pr[S=0|X]}{\E[e^{\eta q(Y)}|X, S=1]} \E[e^{\eta q(Y)}|X, S=1] b^*(X;\eta)\right]\\
&= \frac{1}{\Pr[S=0]} \E\left[ Pr[S=0|X] b^*(X;\eta)\right] \\
&= \frac{1}{\Pr[S=0]} \E\left[ I(S=0) b^*(X;\eta)\right].
\end{align*}
Combing this gives
\[
\widehat \phi_{aug}(\eta) \overset{P}{\longrightarrow} \phi(\eta).
\]
Combing case 1 and case 2 gives that if at least one of $\widehat b(X; \eta)$ or $\widehat a(X;\eta)$ are consistent, then the augmented estimator in non-nested designs is consistent.
\end{proof}

\section{Asymptotic properties for the augmented estimators under a nested design}
\label{app-nest}

The augmented estimator for a nested design is given by
\begin{equation*}
  \begin{split}
\widehat \psi_{aug}(\eta) &= \frac{1}{n} \sum_{i=1}^n \left( S_i L(Y_i, h(X_i^*, \widehat \beta))  + I(S_i=0) \widehat b(X_i; \eta) + \dfrac{I(S_i = 1) (1 - \widehat p(X_i)) e^{\eta q(Y_i)}}{\widehat p(X_i) \widehat c(X_i;\eta)}\left(L(Y_i, h(X_i^*, \widehat \beta)) - \widehat b(X_i; \eta) \right) \right).
	\end{split}
\end{equation*}
For arbitrary functions $ b'(X), c'(X)$, and $p'(X)$ define
\begin{align*}
&G(X,S,Y; b'(X), c'(X), p'(X)) \\ &=S L(Y, h(X^*, \widehat \beta))  + I(S=0) b'(X) + \dfrac{I(S = 1) (1 - p'(X)) e^{\eta q(Y)}}{ p'(X) c'(X)}\left(L(Y, h(X^*, \widehat \beta)) - b'(X) \right).
\end{align*}
Using this notation we can write the augmented estimator as
\[
\widehat \psi_{aug}(\eta) = \mathbb{P}_n(G(X,S,Y; \widehat b(X; \eta), \widehat c(X;\eta), \widehat p(X))).
\]
For nested designs we make the following assumptions
\begin{enumerate}
\item[C1] At least one of 
\[
\widehat b(X; \eta) \overset{P}{\longrightarrow}   \dfrac{\E \big[ L(Y, h(X^*, \widehat \beta))e^{\eta q(Y)} | X, S = 1 \big]}{\E \big[ e^{\eta q(Y)} | X, S = 1 \big]}
\]
or
\[
(\widehat p(X), \widehat c(X;\eta)) \overset{P}{\longrightarrow}  (\Pr[S=1|X], \E[e^{\eta q(Y)}|X, S=1]).
\]
\item[C2] The sequences $G(X,S,Y; b^*(X;\eta), c^*(X;\eta), p^*(X))$ and $G(X,S,Y; \widehat b(X; \eta), \widehat c(X;\eta), \widehat p(X))$ are Donsker.
\item[C3] \[
||G(X,S,Y; \widehat b(X; \eta), \widehat c(X;\eta), \widehat p(X)) - G(X,S,Y; b^*(X;\eta), c^*(X;\eta), p^*(X))||_2 \overset{P}{\longrightarrow} 0.
\]
\item[C4] $\E[G(X,S,Y; b^*(X;\eta), c^*(X;\eta), p^*(X))^2] < \infty$.
\item[C5] The functions $\widehat p(X)$ and $\widehat c(X;\eta)$ are uniformly bounded away from zero and $\E[L(Y, h(X^*, \widehat \beta))e^{\eta q(Y)}|X,S=1]$, $\E[e^{\eta q(Y)}|X,S=1]$, $\widehat b(X; \eta)$ are uniformly bounded.
\end{enumerate}
\begin{theorem}
\label{thm-as-rep-nes}
If assumptions C1-C5 hold, then the augmented estimator for nested designs:
\begin{itemize}
    \item Is consistent. That is, $\widehat \psi_{aug}(\eta) \overset{P}{\longrightarrow} \psi(\eta)$.
    \item Has the asymptotic representation
    \begin{equation}
    \label{thm-1-as}
    \sqrt{n} (\widehat \psi_{aug}(\eta) - \psi(\eta)) = \mathbb{G}_n(G(X,S,Y; b^*(X;\eta), c^*(X;\eta), p^*(X)) + Rem,
    \end{equation}
where the reminder term satisfies
\begin{align*}
Rem  \leq O_P\Bigg(1 &+ \sqrt{n}  \Bigg|\Bigg|\frac{\E[L(Y, h(X^*, \widehat \beta))e^{\eta q(Y)}|X,S=1]}{\E[e^{\eta q(Y)}|X,S=1]}-  \widehat b(X; \eta)\Bigg|\Bigg|_2 \times ||\Pr[S=1|X] - \widehat p(X)||_2 \\
&+ \sqrt{n} \Bigg|\Bigg|\frac{\E[L(Y, h(X^*, \widehat \beta))e^{\eta q(Y)}|X,S=1]}{\E[e^{\eta q(Y)}|X,S=1]}-  \widehat b(X; \eta)\Bigg|\Bigg|_2 \times ||\E[e^{\eta q(Y)}|X,S=1]-  \widehat c(X;\eta)||_2\Bigg).
\end{align*}
\end{itemize}
\end{theorem}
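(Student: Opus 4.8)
The plan is to follow the proof of Theorem~\ref{thm-as-rep} almost verbatim, exploiting the fact that the nested estimating function $G$ differs from the non-nested $H$ only by the additive summand $S L(Y, h(X^*, \widehat \beta))$ and by the use of the deterministic normalization $1/n$ in place of the random factor $n/n_0$. First I would record the decomposition $\psi(\eta) = \psi_1 + \psi_2(\eta)$ with $\psi_1 = \E[S L(Y, h(X^*, \widehat \beta))]$ and $\psi_2(\eta) = \E[I(S=0)\, b(X;\eta)]$, where $b(X;\eta) = \E[L(Y, h(X^*, \widehat \beta)) e^{\eta q(Y)}\mid X, S=1]/\E[e^{\eta q(Y)}\mid X, S=1]$. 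The first summand of $G$ estimates $\psi_1$ by the ordinary sample average $\mathbb{P}_n(S L)$, which involves no nuisance functions; it is therefore automatically $\sqrt{n}$-consistent and asymptotically normal by the central limit theorem and contributes exactly zero bias. All of the substantive work is thus concentrated on the remaining augmentation terms of $G$, whose structure coincides with that already analyzed for $\phi(\eta)$.

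For consistency, I would use assumptions C3--C5 to pass to the limit $\widehat \psi_{aug}(\eta) \overset{P}{\longrightarrow} \E[G(X,S,Y; b^*(X;\eta), c^*(X;\eta), p^*(X))]$, and then verify that this limit equals $\psi(\eta)$ under C1 by replaying the two-case argument of Theorem~\ref{thm-as-rep} on the augmentation part. In Case~1, where $b^*(X;\eta)$ equals the true conditional ratio $b(X;\eta)$, taking the inner expectation over $Y$ within the source population ($S=1$) shows the augmentation term has conditional mean zero given $X$, so only $\E[I(S=0)\,b(X;\eta)] = \psi_2(\eta)$ survives. In Case~2, where $(p^*, c^*)$ are correct, the inverse-odds weight recovers the source-population conditional expectation through the same change-of-measure steps used for the non-nested functional, again yielding $\psi_2(\eta)$ regardless of $b^*$. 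Adding back $\E[SL] = \psi_1$ gives $\psi(\eta)$ in both cases.

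For the asymptotic representation, I would write $\sqrt{n}(\widehat \psi_{aug}(\eta) - \psi(\eta)) = \mathbb{G}_n(G(X,S,Y; \widehat b(X;\eta), \widehat c(X;\eta), \widehat p(X))) + \sqrt{n}(\E[G(X,S,Y; \widehat b(X;\eta), \widehat c(X;\eta), \widehat p(X))] - \psi(\eta))$, then split the first term as $\mathbb{G}_n(G(\cdot^*)) + \{\mathbb{G}_n(G(\widehat\cdot)) - \mathbb{G}_n(G(\cdot^*))\}$. The Donsker condition C2 together with the $L_2$-consistency C3 forces the bracketed empirical-process difference to be $o_P(1)$, which the $O_P(1)$ in the stated bound for $Rem$ absorbs, and C4 with the central limit theorem delivers asymptotic normality of the leading term $\mathbb{G}_n(G(\cdot^*))$. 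Because the $SL$ summand contributes $\E[SL] = \psi_1$ exactly with no dependence on the estimated nuisance functions, the remaining bias collapses to the bias of the $\psi_2$ part, which is algebraically identical to the non-nested calculation in expressions~\eqref{psi-re}--\eqref{T2}; factoring that bias into a single expectation of the product $(b(X;\eta) - \widehat b(X;\eta))$ times a weight-error factor, and then applying the Cauchy--Schwarz inequality together with the boundedness in C5, produces the stated product-of-errors bound for $Rem$.

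The main obstacle I expect is not conceptual but bookkeeping: re-expressing the bias so that the factor $b(X;\eta) - \widehat b(X;\eta)$ is cleanly isolated, and then showing, via C5, that the companion factor $\frac{\Pr[S=1\mid X](1-\widehat p(X))\E[e^{\eta q(Y)}\mid X,S=1]}{\widehat p(X)\,\widehat c(X;\eta)} - \Pr[S=0\mid X]$ is controlled in $L_2$ by $\|\Pr[S=1\mid X] - \widehat p(X)\|_2$ and $\|\E[e^{\eta q(Y)}\mid X,S=1] - \widehat c(X;\eta)\|_2$. Relative to the non-nested proof, the genuine simplification is that the deterministic $1/n$ normalization removes the $n/n_0$ fluctuation, so no analogue of the $O_P(1)$ correction arising from converting $n/n_0$ to $\Pr[S=0]^{-1}$ is required; here the $O_P(1)$ in the bound is needed only to absorb the $o_P(1)$ Donsker remainder.
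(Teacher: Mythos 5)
Your proposal is correct and follows essentially the same route as the paper's proof: both reduce the consistency argument to the two-case (model-robustness) computation already done for the non-nested design, handle the $S_i L(Y_i, h(X_i^*, \widehat \beta))$ summand trivially since it involves no nuisance functions, and obtain the remainder bound by reducing the bias term $\sqrt{n}\,(\E[G(X,S,Y;\widehat b, \widehat c, \widehat p)] - \psi(\eta))$ to the non-nested algebra followed by Cauchy--Schwarz under C5. Your observation that the deterministic $1/n$ normalization removes the $O_P(1)$ correction needed in the non-nested case to pass from $n/n_0$ to $\Pr[S=0]^{-1}$ is a nice clarification of a point the paper leaves implicit, but it does not change the structure of the argument.
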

The asymptotic representation of the estimator has the same implication for rate of convergence of the estimator as for the non-nested designs. Now we prove Theorem \ref{thm-as-rep-nes}.
\begin{proof}
\textbf{Consistency:} 
We have that 
\begin{align*}
\widehat \psi_{aug}(\eta) &\overset{P}{\longrightarrow} \E[I(S=1) L(Y, h(X^*, \widehat \beta))] \\ &+ \E\left[ I(S=0) b^*(X;\eta) +  \frac{I(S=1) (1 - p^*(X)) e^{\eta q(Y)} }{p^*(X) c^*(X;\eta) )} \left(L(Y, h(X^*, \widehat \beta)) - b^*(X;\eta) \right)\right].
\end{align*}
Now we show that the right hand side of the above equation is equal to $\psi(\eta)$ if either $\widehat b(X; \eta)$ is consistent or both $\widehat p(X)$ and $\widehat c(X;\eta)$ are consistent.
\\
\\
\textbf{Case 1:} First assume that 
\[
\widehat b(X; \eta) \overset{P}{\longrightarrow}   \dfrac{\E \big[ L(Y, h(X^*, \widehat \beta))e^{\eta q(Y)} | X, S = 1 \big]}{\E \big[ e^{\eta q(Y)} | X, S = 1 \big]},
\]
but we make no assumptions on the limits $p^*(X)$ and $c^*(X;\eta)$. As shown for the non-nested case
\[
\E\left[\frac{I(S=1) (1 - p^*(X)) e^{\eta q(Y)} }{p^*(X) c^*(X;\eta) } \left(L(Y, h(X^*, \widehat \beta)) - b^*(X;\eta)\right)\right] = 0
\]
Hence,
\begin{align*}
\widehat \psi_{aug}(\eta) &\overset{P}{\longrightarrow} \E[I(S=1) L(Y, h(X^*, \widehat \beta))] \\ &+ \E\left[ I(S=0) \dfrac{\E \big[ L(Y, h(X^*, \widehat \beta))e^{\eta q(Y)} | X, S = 1 \big]}{\E \big[ e^{\eta q(Y)} | X, S = 1 \big]} \right] = \psi(\eta).
\end{align*}
\textbf{Case 2:} Now assume that $p^*(X) = \Pr[S=1|X]$ and $c^*(X;\eta) = \E [ e^{\eta q(Y)} | X, S = 1]$, but we make no assumptions on the limit $b^*(X;\eta)$. Similar to the non-nested design, we have
\begin{align*}
&\E\left[ \frac{I(S=1) (1 - p^*(X)) e^{\eta q(Y)} }{p^*(X) c^*(X;\eta) } L(Y, h(X^*, \widehat \beta))\right] \\
&= \E\left[  I(S=0) \frac{\E[e^{\eta q(Y)} L(Y, h(X^*, \widehat \beta))|X,S=1 ]}{\E[e^{\eta q(Y)}|X, S=1]}\right]
\end{align*}
And
\[
\E\left[\frac{I(S=1) (1 - p^*(X)) e^{\eta q(Y)} }{p^*(X) c^*(X;\eta) } b^*(X;\eta)\right] = \E\left[ I(S=0) b^*(X;\eta)\right].
\]
Combing this gives
\[
\widehat \psi_{aug}(\eta) \overset{P}{\longrightarrow} \psi(\eta).
\]
Combing case 1 and case 2 gives that if either $\widehat b(X; \eta)$ is consistent or both $\widehat p(X)$ and $\widehat c(X;\eta)$ are consistent, then the augmented estimator in nested designs is consistent.
\\
\\
\noindent \textbf{Asymptotic representation:} Rewrite
\begin{align*}
\sqrt{n} (\widehat \psi_{aug}(eta) - \psi(\eta)) &= \Big(\mathbb{G}_n(G(X,S,Y; \widehat b(X; \eta), \widehat c(X;\eta), \widehat p(X))) \\
&- \mathbb{G}_n(G(X,S,Y; b^*(X;\eta), c^*(X;\eta), p^*(X))) )\Big) \\
&+ \mathbb{G}_n(G(X,S,Y; b^*(X;\eta), c^*(X;\eta), p^*(X))) \\
& + \sqrt{n} \left[\E[G(X,S,Y; \widehat b(X; \eta), \widehat c(X;\eta), \widehat p(X)] - \psi\right].
\end{align*}
As for the non-nested design the behavior of the estimator depends 
\[
\sqrt{n} \left(\E[G(X,S,Y; \widehat b(X; \eta), \widehat c(X;\eta), \widehat p(X))] - \psi(\eta)\right).
\]
Rewrite $\psi(\eta)$ as
\begin{equation*}
\psi(\eta) = \E[I(S=1) L(Y, h(X^*, \widehat \beta))] + \E\left[\Pr[S=0|X] \dfrac{\E \big[ L(Y, h(X^*, \widehat \beta))e^{\eta q(Y)} | X, S = 1 \big]}{\E \big[ e^{\eta q(Y)} | X, S = 1 \big]}\right] 
\end{equation*}
and the asymptotic representation follows using the same arguments as for the non-nested case.
\end{proof}
As for non-nested designs we can use the alternative parameterization of the augmented estimator to rewrite it as
\begin{equation*}
  \begin{split}
\widehat \psi_{aug}(\eta) &= \frac{1}{n} \sum_{i=1}^n \left( S_i L(Y_i, h(X_i^*, \widehat \beta))  + I(S_i=0) \widehat b(X_i; \eta) + I(S_i=1) e^{a(X_i;\eta) + \eta q(Y_i)} \left(L(Y_i, h(X_i^*, \widehat \beta)) - \widehat b(X_i; \eta) \right) \right).
	\end{split}
\end{equation*}
The following theorem shows the doubly robustness of the augmented estimator under a nested design using the alternative parameterization
\begin{theorem}
If at least one of $\widehat b(X; \eta)$ or $\widehat a(X;\eta)$ are consistent, then the augmented estimator for a nested designs is consistent (i.e.,~$\widehat \psi_{aug}(\eta) \overset{P}{\longrightarrow} \psi(\eta)$).
\end{theorem}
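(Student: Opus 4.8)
The plan is to reduce this statement to the already-established double-robustness result for the non-nested design (Theorem \ref{non-nest-dr-1}) by exploiting the additive structure of the nested-design estimator. First I would split $\widehat \psi_{aug}(\eta)$ into its leading term $\mathbb{P}_n(S\, L(Y, h(X^*, \widehat\beta)))$ and the augmentation term $\mathbb{P}_n(I(S=0)\widehat b(X;\eta) + I(S=1) e^{\widehat a(X;\eta) + \eta q(Y)}(L(Y,h(X^*,\widehat\beta)) - \widehat b(X;\eta)))$. By the law of large numbers the leading term converges in probability to $\E[S\, L(Y,h(X^*,\widehat\beta))] = \psi_1$, and this limit does not depend on either nuisance estimator; it is precisely the first (sensitivity-parameter-free) component of $\psi(\eta)$ identified in Appendix \ref{appendix:identification}. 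Hence it suffices to show that the augmentation term converges to $\psi_2(\eta) = \E[I(S=0)\,\E[L(Y,h(X^*,\widehat\beta))e^{\eta q(Y)}|X,S=1]/\E[e^{\eta q(Y)}|X,S=1]]$ whenever at least one of $\widehat a(X;\eta)$ or $\widehat b(X;\eta)$ is consistent.

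For the augmentation term I would pass (under the uniform boundedness conditions analogous to B5) to the probability limit $\E[I(S=0) b^*(X;\eta) + I(S=1) e^{a^*(X) + \eta q(Y)}(L(Y,h(X^*,\widehat\beta)) - b^*(X;\eta))]$, where $a^*$ and $b^*$ are the limits of $\widehat a(X;\eta)$ and $\widehat b(X;\eta)$. The remaining work is the two-case computation, which runs line-for-line as in the proof of Theorem \ref{non-nest-dr-1} except that the outer factor $\Pr[S=0]^{-1}$ is absent. In Case 1 ($b^*(X;\eta) = \E[L e^{\eta q(Y)}|X,S=1]/\E[e^{\eta q(Y)}|X,S=1]$, with $a^*$ arbitrary) I would condition on $X$, use iterated expectations to convert the $I(S=1)$ indicator into a conditional $S=1$ expectation, and observe that the two pieces of the augmentation term cancel, leaving $\E[I(S=0)b^*(X;\eta)] = \psi_2(\eta)$. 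In Case 2 ($a^*(X) = \mbox{logit}(\Pr[S=0|X]) - \ln \E[e^{\eta q(Y)}|X,S=1]$, with $b^*$ arbitrary) I would substitute $e^{a^*(X)} = \Pr[S=0|X]/(\Pr[S=1|X]\,\E[e^{\eta q(Y)}|X,S=1])$, condition on $X$, and verify that $\E[I(S=1)e^{a^*(X)+\eta q(Y)} L(Y,h(X^*,\widehat\beta))] = \psi_2(\eta)$ while $\E[I(S=1)e^{a^*(X)+\eta q(Y)} b^*(X;\eta)] = \E[I(S=0) b^*(X;\eta)]$, so the two $b^*$ contributions cancel and only $\psi_2(\eta)$ survives. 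Combining either case with the law-of-large-numbers limit of the leading term yields $\widehat\psi_{aug}(\eta) \overset{P}{\longrightarrow} \psi_1 + \psi_2(\eta) = \psi(\eta)$.

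The main obstacle is bookkeeping rather than a genuine analytic difficulty: one must track that the nested-design estimator averages with weight $1/n$ over the \emph{entire} cohort, so the augmentation piece produces the unnormalized expectations $\E[I(S=0)\cdots]$ directly, whereas the non-nested proof carries the $\Pr[S=0]^{-1}$ factor because it averages with weight $1/n_0$ over the target sample. This normalization correspondence is exactly why the non-nested calculation lands on the conditional quantity $\phi(\eta) = \E[\,\cdot\mid S=0]$ while the present one lands on the marginal quantity $\psi_2(\eta)$. Once this is in place, the cancellation algebra of Theorem \ref{non-nest-dr-1} transfers verbatim, and the only additional ingredient is the elementary argument for the new $S\,L$ term. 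I would therefore present the proof by citing the two-case cancellation from the non-nested proof and spelling out only the decomposition and the treatment of the leading term.
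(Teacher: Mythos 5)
Your proposal is correct and matches the paper's own proof essentially step for step: the paper likewise takes the probability limit of the full estimator, handles the $S\,L(Y,h(X^*,\widehat\beta))$ term separately, and disposes of the augmentation term by citing the two-case cancellation algebra already established in the proof of Theorem \ref{non-nest-dr-1}, with the same normalization bookkeeping (weight $1/n$ rather than $1/n_0$, hence no $\Pr[S=0]^{-1}$ factor) that you identify. No gaps to report.
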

\begin{proof}
We have 
\begin{equation*}
  \begin{split}
\widehat \psi_{aug}(\eta) \overset{P}{\longrightarrow} \E\left[ S L(Y, h(X^*, \widehat \beta))  + I(S=0) b^*(X;\eta) + I(S=1) e^{a^*(X) + \eta q(Y)} \left(L(Y, h(X^*, \widehat \beta)) - b^*(X;\eta) \right) \right].
	\end{split}
\end{equation*}
\textbf{Case 1:} Assume that 
\[
b^*(X;\eta) = \dfrac{\E \big[ L(Y, h(X^*, \widehat \beta))e^{\eta q(Y)} | X, S = 1 \big]}{\E \big[ e^{\eta q(Y)} | X, S = 1 \big]},
\]
but we do not assume that the estimator $\widehat a(X;\eta)$ is consistent. In the proof of Theorem \ref{non-nest-dr-1} we showed that
\[
\E\left[ I(S=1) e^{a^*(X) + \eta q(Y)} \left(L(Y, h(X^*, \widehat \beta)) - b^*(X;\eta)\right)\right] =0.
\]
And by expression \eqref{eq_mean_identification}
\begin{align*}
\psi(\eta) = \E\left[ S L(Y, h(X^*, \widehat \beta))  + I(S=0) \dfrac{\E \big[ L(Y, h(X^*, \widehat \beta))e^{\eta q(Y)} | X, S = 1 \big]}{\E \big[ e^{\eta q(Y)} | X, S = 1 \big]}\right].
\end{align*}
Completing the proof for Case 1.

\textbf{Case 2:} Now assume that
\[
a^*(X) = \mbox{logit} \big(\Pr[S = 0 | X]\big) -  \ln \E [ e^{\eta q(Y)} | X, S = 1 ],
\]
but we don't make any assumptions on the limit $b^*(X;\eta)$. As in the proof of Theorem \ref{non-nest-dr-1}
\[
\E\left[ I(S=1) e^{a^*(X) + \eta q(Y)} L(Y, h(X^*, \widehat \beta))\right] = \E\left[  I(S=0) \frac{\E[e^{\eta q(Y)} L(Y, h(X^*, \widehat \beta))|X,S=1 ]}{\E[e^{\eta q(Y)}|X, S=1]}\right]
\]
and 
\[
 \E\left[ I(S=1) e^{a^*(X) + \eta q(Y)} b^*(X;\eta)\right] = \E\left[ Pr[S=0|X] b^*(X;\eta)\right] = \E\left[ I(S=0) b^*(X;\eta)\right].
\]
Combining this gives 
\[
\widehat \psi_{aug}(\eta) \overset{P}{\longrightarrow} \E\left[ S L(Y, h(X^*, \widehat \beta))\right] + \E\left[  I(S=0) \frac{\E[e^{\eta q(Y)} L(Y, h(X^*, \widehat \beta))|X,S=1 ]}{\E[e^{\eta q(Y)}|X, S=1]}\right] = \psi(\eta).
\]
From Case 1 and Case 2 above we have that if at least one of $\widehat b(X; \eta)$ or $\widehat a(X;\eta)$ is consistent, then $\widehat \psi_{aug}(\eta) \overset{P}{\longrightarrow}  \psi(\eta)$.
\end{proof}

\section{Stability analysis using the Coronary Artery Study data}
\label{CASS:SWA}

In this section we present results of two stability analysis for the CASS data. First, we used generalized additive models are used to estimate  $\Pr[Y = 1 | X, S = 1]$ and $\Pr[S=1|X]$ and second we used the Jackknife to construct confidence intervals. 

Figures \ref{CassGlmJK} shows results when the Jackknife is used for estimating 95\% confidence intervals and logistic regression was used to estimate $\Pr[Y = 1 | X, S = 1]$ and $\Pr[S=1|X]$. Figure \ref{GamBS} shows results when generalized additive models are used to estimate $\Pr[Y = 1 | X, S = 1]$ and $\Pr[S=1|X]$ and the non-parametric bootstrap is used for estimating 95\% confidence intervals. Figure \ref{GamJK} shows results when generalized additive models are used to estimate $\Pr[Y = 1 | X, S = 1]$ and $\Pr[S=1|X]$ and the Jackknife is used for estimating 95\% confidence intervals. The results show that confidence intervals based on the non-parametric bootstrap are slightly wider than confidence intervals based on the Jackknife and that the values of both point estimates and confidence intervals were not sensitive to whether  logistic regression or generalized additive models were used to estimate $\Pr[Y = 1 | X, S = 1]$ and $\Pr[S=1|X]$.

\begin{figure}[ht]
    \centering
    \includegraphics[width=\textwidth]{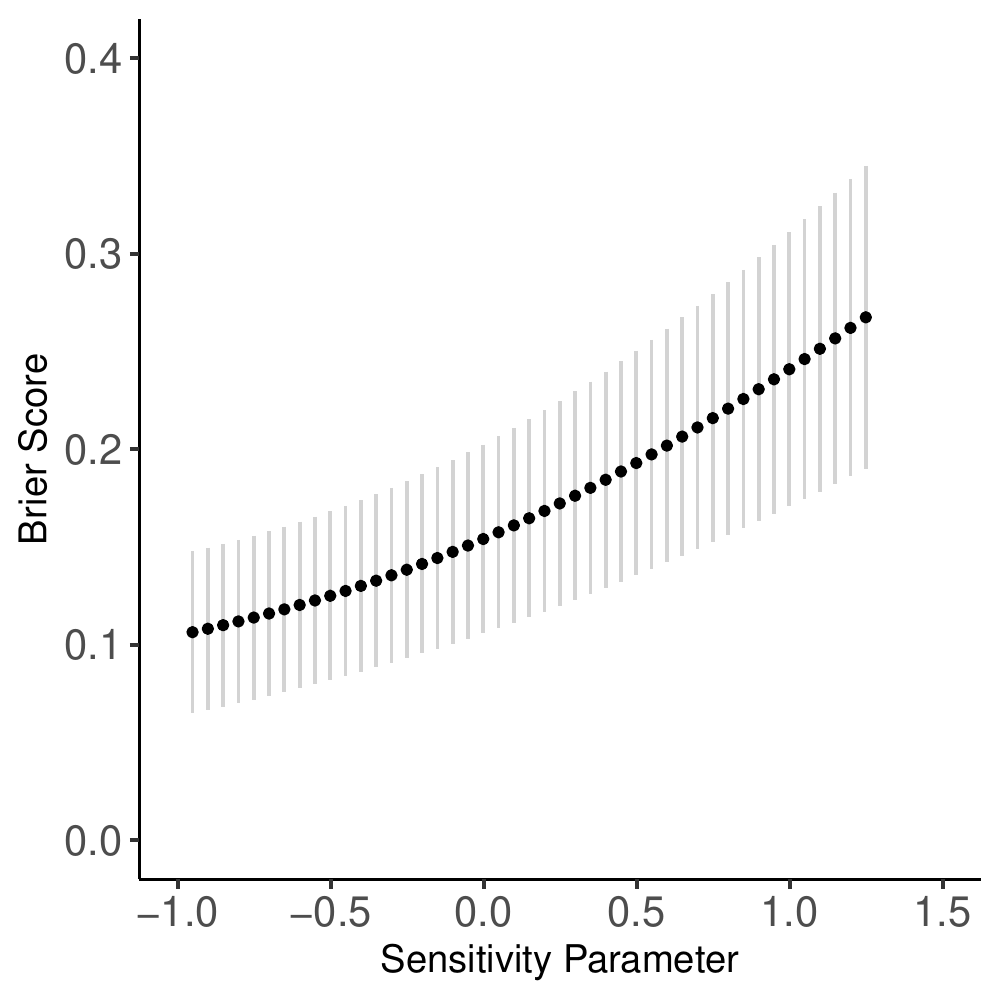}
    \caption{Sensitivity analysis using CASS data. The value used for the sensitivity parameter ($\eta$) is on the x-axis and the corresponding doubly robust estimates of the Brier risk are on the y-axis. The nuisance functions $\Pr[Y = 1 | X, S = 1]$ and $\Pr[S=1|X]$ were estimated using logistic regression models and the 95\% confidence intervals were calculated using the Jackknife. The solid line connects point estimates and the gray lines are point-wise 95\% confidence intervals.}
    \label{CassGlmJK}
\end{figure}

\begin{figure}[ht]
    \centering
    \includegraphics[width=\textwidth]{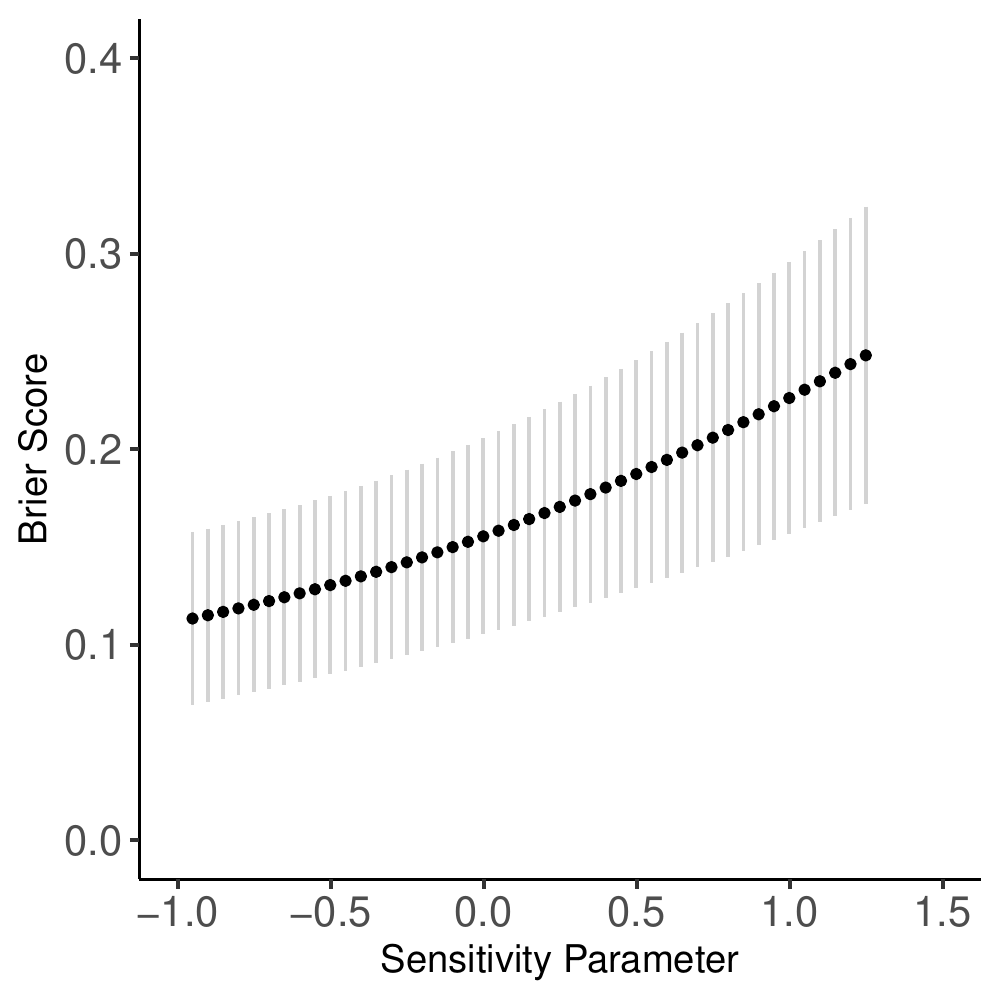}
    \caption{Sensitivity analysis using CASS data. The value used for the sensitivity parameter ($\eta$) is on the x-axis and the corresponding doubly robust estimates of the Brier risk are on the y-axis. The nuisance functions $\Pr[Y = 1 | X, S = 1]$ and $\Pr[S=1|X]$ were estimated using generalized additive models and the 95\% confidence intervals were calculated using the non-parametric bootstrap. The solid line connects point estimates and the gray lines are point-wise 95\% confidence intervals.}
    \label{GamBS}
\end{figure}

\begin{figure}[ht]
    \centering
    \includegraphics[width=\textwidth]{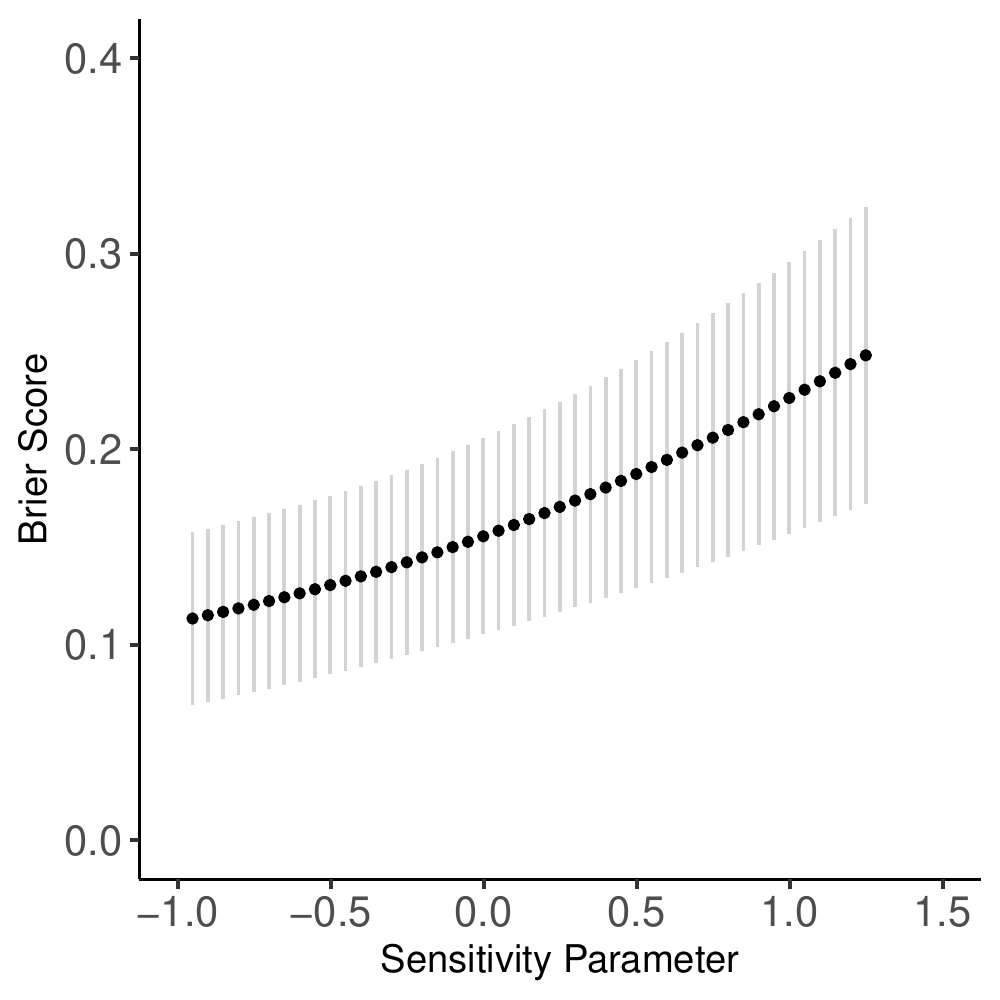}
    \caption{Sensitivity analysis using CASS data. The value used for the sensitivity parameter ($\eta$) is on the x-axis and the corresponding doubly robust estimates of the Brier risk are on the y-axis. The nuisance functions $\Pr[Y = 1 | X, S = 1]$ and $\Pr[S=1|X]$ were estimated using generalized additive models and the 95\% confidence intervals were calculated using the Jackknife. The solid line connects point estimates and the gray lines are point-wise 95\% confidence intervals.}
    \label{GamJK}
\end{figure}

\end{document}